\DeclareMathOperator*{\minimize}{minimize}
 \theoremstyle{remark}
\theoremstyle{}
\newtheorem{theorem}{Theorem}
\theoremstyle{}
\newtheorem{lemma}{Lemma}
\theoremstyle{}
\theoremstyle{}
\newtheorem{corollary}{Corollary}
\theoremstyle{}
\newtheorem{definition}{Definition}
\theoremstyle{remark}
\newtheorem{example}{Example}
\newcommand{\tabcaption}{\def\@captype{table}\caption}
\begin{document}

\title{On the Placement and Delivery Schemes for Decentralized Coded Caching System}

\author{Qifa Yan, Xiaohu Tang, and~Qingchun Chen
\thanks{The authors are with The School of Information Science and Technology, Southwest Jiaotong University, 611756, Chengdu, China. E-mails: qifa@my.swjtu.edu.cn; xhutang@swjtu.edu.cn; qcchen@swjtu.edu.cn.
}
\thanks{The work was presented in part at WCSP 2016 \cite{greedy2016yan}.}
}

\maketitle

\begin{abstract}
Network based on distributed caching of content is a new architecture to alleviate the ongoing explosive demands for rate of multi-media traffic. In caching networks, coded caching is a recently proposed technique that achieves significant performance gains compared to uncoded caching schemes. In this paper, we derive a lower bound on the average rate with a memory constraint for a family of  caching allocation placement and a family of XOR cooperative delivery. The lower bound inspires us how placement and delivery affect the rate memory tradeoff. Based on the clues, we design a new placement and two new delivery algorithms. On one hand, the new placement scheme can allocate the cache more flexibly compared to grouping scheme. On the other hand, the new delivery can exploit more cooperative opportunities compared to the known schemes.  The simulations validate our idea.
\end{abstract}

\begin{IEEEkeywords}
Content delivery networks, coded caching, decentralized
\end{IEEEkeywords}
\IEEEpeerreviewmaketitle

\section{Introduction}

With the dramatic increase of traffic, wireless networks are expected to support   higher and higher data rate applications such as video streaming.  However, the rate of the network is fundamentally restricted by its limited amount of spectrum, which leads to congestion during peak times. Fortunately,
the emergence of such applications involves repeated wireless transmissions of content that are requested multiple times by different users at different times.
Therefore, one promising approach to mitigate the congestion is ``removing" some of the traffic by duplicating some contents into memories distributed across the network when the network resources are abundant. Normally, this duplication is called content placement or caching and can be performed during off peak times.

As a result, the congestion during peak times can be reduced by utilizing caching in two separated phases: a content placement phase and a content delivery phase. When the network resources are under-utilized, the placement phase can be performed to disseminate some contents into memories across the network, even without knowledge of user request. Within the delivery phase in congested network scenarios, the user requests can be satisfied with the help of the contents in the memories.
In the past decades, many cache assisted delivery schemes have been investigated (e.g. \cite{Meyerson2001,Baev2008,Borst2010}). Among them, typically
coding across the cached content or/and the delivered signals was not considered, whereas
usually the utilization of popularity of files was focused on. Through efficient use of caching, their gains are derived from the local contents of individual user.

Recently,  Maddah-Ali and Niesen \textit{et al.} proposed coded caching schemes that create multicast opportunities by globally utilizing the caching of all users (e.g.,\cite{maddah2013decentralized,maddah2013fundamental,niesen2013coded}). They showed that significant gains can be achieved by coding in the delivery phase compared to the non coded schemes. The initial work was  investigated for a centralized setting \cite{maddah2013fundamental}, where the central server is able to coordinate all the contents at the users carefully to create multicast opportunities for the delivery phase. Later on, the work was extended to a decentralized setting \cite{maddah2013decentralized}, where the cached contents should be independent across different users. In contrast to the former,  the  advantage of the decentralized coded caching scheme is that, it can be realized in more circumstances, such  that the server does not know the identity  of the active users  or the users are in distinct networks in the placement phase, while its performance is close to its centralized counterpart \cite{Yan2016Gap}.
Moreover, various settings were explored, and the examples include online cache update \cite{maddah2013online,Yan2017Online}, hierarchical network \cite{Hierarchcal2014}, and many aspects  like subpacketization \cite{finite2016,Yan2015PDA,Yan2017bipartite}, delay \cite{NiesenDelay},  have been investigated.
However, there are some noteworthy shortcomings in the seminal decentralized coded caching scheme and the subsequent ones:

\begin{enumerate}
  \item The decentralized coded caching algorithm in \cite{maddah2013decentralized} evenly allocates each user's cache to the files.   Then it visits each user subset in order, and pads the bit vectors intended for different users to the length of the longest vector when visiting each user subset.
     In fact,  this algorithm is particularly suitable for uniform popularity because the ``padded zeros" do not contain any information, which leads to a loss in terms of rate-memory tradeoff. Even for uniform popularity,  the loss is significant in practical implementations as well, since the law of large numbers plays out.

  \item To cope with nonuniform popularity, Niesen and Maddah{-}Ali further proposed grouping algorithm in \cite{niesen2013coded}, where the files are split into several groups such that those having similar popularity are in the same group and allocated the same cache size. Accordingly, decentralized coded caching scheme is then implemented in those users whose requested files belonging to the same group. Since users are split into user groups based on the popularity of their requested files, nonuniform popularity is indeed transformed into near-uniform popularity in each group through grouping. However, users in distinct groups are served in turn, thus it excludes the cooperative opportunities between users in different groups, which  definitely deteriorates the rate-memory tradeoff performance.
\end{enumerate}

In this paper, addressing the decentralized coded caching scheme, we first identify a family of placement algorithms as \textit{cache allocation placement} and a family of delivery algorithms as \emph{XOR cooperative delivery}. These two generic families include the placement or delivery procedures of the decentralized coded caching \cite{maddah2013decentralized} and grouping \cite{niesen2013coded} algorithm as special cases. Under this framework, we derive a lower bound on the average delivery rate with  memory constraints. The lower bound inspires us on how the placement and the delivery affect the rate-memory tradeoff performance. Based on the clues obtained from the lower bound, we design a new placement algorithm and two new delivery algorithms.  In the new placement algorithm, we optimize the cache allocation parameters according to different files' popularity.  The new delivery algorithms, which we refer to as \emph{set centered greedy coded delivery} and \emph{bit centered greedy coded delivery}, overcome the two losses mentioned above through achieveing more cooperative opportunities. Our simulation results verify that in most cases, our new algorithm can achieve better performance, especially when $F\ll 2^K$.

The rest of this paper is  organized as follows: In Section \ref{sec_prob}, we introduce   the system model  and background information. In Section \ref{sec_lowerbound}, we derive a lower bound on the average  delivery rate for a family of placement and delivery algorithms. We  propose our new placement and delivery strategy in Section \ref{allocation} and \ref{delivery} respectively. Section \ref{sec_simu} presents our simulation results and Section \ref{sec_con} concludes this paper.

\textbf{Notations:} $\mathbb{E}[\cdot]$ is used to take expectation;  $\oplus$ is used to represent bitwise XOR operation; $\mathbb{P}(A)$ means the probability of event $A$; $|A|$ means the cardinality of set $A$; A column vector is denoted by a bold lower case character; A set (subset) of users or files are denoted in calligraphy font.

\section{Problem Setting and Background}\label{sec_prob}
\subsection{Problem Setting}
We consider a  system consisting of a server connected to $K$ users through an error-free shared link, as illustrated in Fig. \ref{cache}. The server has $N$ files, which are denoted by $\mathcal{N}=\{f_1,f_2,\cdots,f_N\}$, each file is  of  size $F$ bits. Since in a video application such as Netfix, the number of
“files” (each may correspond to a short video segment) is likely to be larger than the number of users, we follow \cite{Hierarchcal2014} to assume $N\geq K$. Suppose that each user has access to a cache of size $MF$ bits, where $M\in[0,N]$. The probability of file $f_i$ being requested by a user is $ p_i$, which is assumed to be  independent of the users in this paper. Let $\bm p=(p_1,p_2,\cdots,p_N)^T$,  we call this vector popularity.
\begin{figure}[htbp]
\centering\includegraphics[width=0.45\textwidth]{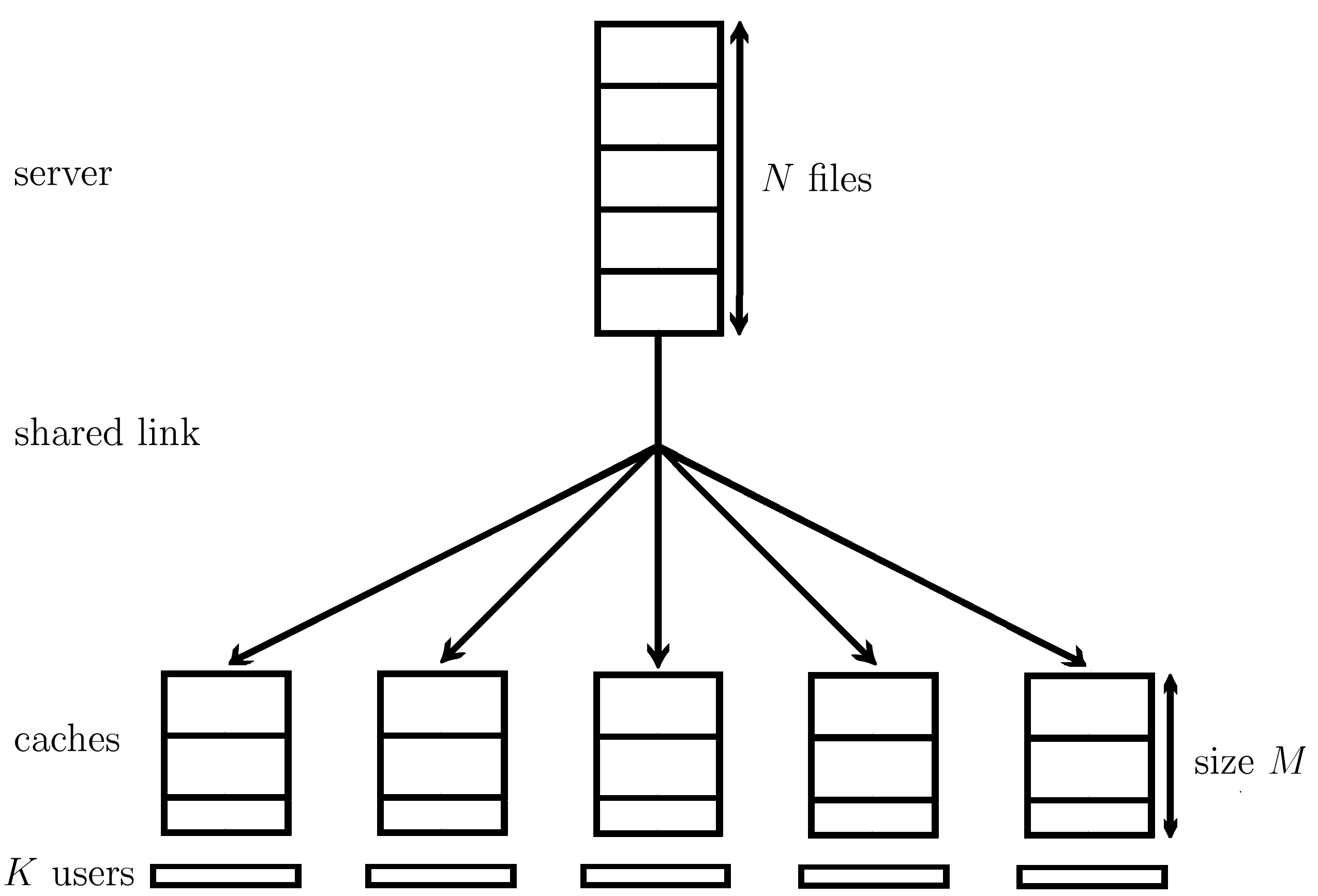}\caption{A server with $N$ files (each of $F$ bits) is connected to $K$ users equipped with an isolated cache of size $MF$ bits through a shared link. In the figure, $N=5, K=5$ and $M=\frac{5}{2}$.}\label{cache}
\end{figure}

The system operates in two phases: a placement phase and a delivery phase. The placement phase can be accomplished when the network load is low. During this phase, via the shared link the server is able to fill each user's cache as an arbitrary function of $N$ files and their popularity $\bm p$  only subject to its cache size $MF$ bits.  Note that the caching function is not allowed to depend on any user request. In other words, all the information about user requests available in this phase is the files' popularity $\bm p$ \cite{maddah2013fundamental}. Besides,  the decentralized setting requires that the cache contents of different users  are independent  \cite{maddah2013decentralized}.  During the delivery phase, it is assumed that each user requests one file from the server according to the popularity $\bm p$ independently. Precisely, the request of users is denoted by $\bm d=(d_1,d_2,\cdots,d_K)^T$, namely, user $k$ requests the $d_k$th file $f_{d_k}$, where $d_1,d_2,\cdots,d_K$ are  identically independently distributed (i.i.d) as
\begin{align}
\mathbb{P}(d_k=i)=p_i,~\forall ~k\in\mathcal{K}, ~1\le i\le N,\notag
\end{align}
and $\mathcal{K}=\{1,2,\cdots,K\}$ denotes the set of users.
Once the server received the users' request $\bm{d}$, it broadcasts $R_{\bm{d}}F$ bits to users. Each user is able to recover its requested file from the message received in the delivery phase with help of its caching.

We refer to $R_{\bm{d}}$ as the rate of the system under the condition of users' request $\bm{d}$. Since $\bm{d}$ is a random vector, the average rate $\mathbb{E}\left[R_{\bm d}\right]$ can be used to characterize the system performance, which is the main concern throughout this paper. Especially, we are interested in the achievable tradeoff between $\mathbb{E}\left[R_{\bm d}\right]$ and $M$. In order to simplify the notation, we drop the subscript $\bm d$, and then use $R$ to indicate  the rate of a realization of the system. Thus, the average rate of the system is denoted by $\mathbb{E}\mathbf[R]$.

\subsection{Related Work}\label{sec_prework}

In \cite{maddah2013decentralized}, Maddah-Ali and Niesen  initialized the ``coded caching" for decentralized setting. When $N\geq K$, they showed that significant gain can be achieved by using caches to create multicast opportunities. Concretely, they
  proved  that the following worst case (over all possible demands) rate-memory tradeoff curve
  \begin{align}
R(M,N,K)=\frac{1-{M}/{N}}{M/N}\cdot{\left(1-\left(1-\frac{M}{N}\right)^K\right)}\label{Rd}
\end{align}
can be arbitrarily approached as $F\rightarrow\infty$.

Note that rate \eqref{Rd} is approached by allocating the cache evenly to $N$ files. For popularity close to uniform popularity, it is reasonable. While for nonuniform demands, a strategy was introduced in \cite{niesen2013coded} that partitions the $N$ files into $L$ groups $\mathcal{N}_1,\mathcal{N}_2,\cdots,\mathcal{N}_L$, and each cache is partitioned into $L$ parts accordingly, one part for each group. By implementing the decentralized algorithm \cite{maddah2013decentralized} for each group, the algorithm approaches the following average rate
\begin{align}
R_G(M)=\sum_{l=1}^L\mathbb{E}\left[R(M_l,N_l,\mathsf{K}_l)\right]\label{Rg}
\end{align}
 when $F\rightarrow\infty$, where $R(\cdot,\cdot,\cdot)$  is given by \eqref{Rd}, $M_l$ is the (normalized) cache size allocated to group $\mathcal{N}_l$, $N_l$ is the number of files in group $\mathcal{N}_l$, and $\mathsf{K}_l$ is the number of users  who request a file in group $\mathcal{N}_l$. We highlight the fact that $\mathsf{K}_l$ is a random variable by writing it in sanf-serif font in \eqref{Rg}, where  the expectation is taken with respect to $\mathsf{K}_l$.

The  memory allocation parameters $M_l$ can be optimized to minimize the expectation in \eqref{Rg}, but finding  the optimal number of groups and allocating the files to these groups is not an easy task. In \cite{niesen2013coded},  a potentially suboptimal grouping strategy   was proposed,
 which is denoted by ``grouping algorithm" in  this paper.

For simplicity, in the following, we refer to the placement procedures of decentralized coded caching and grouping algorithms as \emph{original placement} (OP) and \emph{grouping placement} (GP), and the delivery procedures  as \emph{original delivery} (OD) and \emph{grouping delivery} (GD) respectively.

\section{A Lower Bound on Cooperative Delivery Rate}\label{sec_lowerbound}

In this section, we derive a lower bound on $\mathbb{E}[R]$ that can be applied to not only the known algorithms, i.e., the original one \cite{maddah2013decentralized} and the grouping one \cite{niesen2013coded}, but also general cases, i.e., \emph{caching allocation placement} and \emph{XOR cooperative delivery}.

\textbf{
Placement:} Assume in the placement phase, the server fills the users' cache by the following strategy: for each $k\in \mathcal{K}$ and each file $f_i\in\mathcal{N}$, user $k$ independently caches a subset of $q_iF$ bits of file $f_i$, chosen uniformly at random. The parameters $q_i$ should satisfy:
 \begin{align}
 &0\leq q_i\leq 1,~\forall ~1\leq i\leq N,\label{qleq1}\\
 &\sum_{i=1}^Nq_i=M.\label{qsum}
 \end{align}
    We call this \textit{cache allocation placement}.

 Note that, the above placement   will reduce to OP when $q_i=\frac{M}{N}$ for all $i$ and GP  when $q_i=\frac{M_l}{N_l}$,  $\forall~i\in \mathcal{N}_l$. In general, we can specify a particular placement strategy by correspondingly setting the allocation parameters $q_i$.

\textbf{ Delivery:} We assume that for every link use, the server transmits signals for some users $\mathcal{S}\subset \mathcal{K}$. The transmitted signals are of form $\oplus_{k\in \mathcal{S}}v_k$, where $\oplus$ is Exclusive OR (XOR) operation, $v_k$ represents a bit to be delivered to user $k$.  User $k$ caches all  $v_j, \forall~j\in \mathcal{S},j\ne k$, thus user $k$ subtracts other $v_j$ from the received signal $\oplus_{k\in \mathcal{S}}v_k$ to immediately decode $v_k$. We call this type of delivery strategy \textit{XOR cooperative delivery}.

 Obviously, the above delivery includes OD and GD as special cases.
For ease of illustration, we give several definitions.
 \begin{definition} \label{def2}
 For a bit $b$, we say that $b$ covers a user $k$  if  $b$ is in the cache of user $k$ and $b$ covers a set $\mathcal{T}$
 if $b$ covers every user in $\mathcal{T}$. In particular, the largest subset of $\mathcal{K}$ that $b$ covers
 is said to be  the cover set of $b$.
 \end{definition}

If $b$ covers $\mathcal{T}$, then when $b$ is transmitted after XOR operation with some bits intended for users in $\mathcal{T}$, those users can subtract $b$ from the received signal since $b$ is in their caches. Therefore, ``$b$ covers $\mathcal{T}$" can be interpreted as ``transmitting $b$ does not interfere any user in $\mathcal{T}$" from the point view of delivering bits.

 \begin{definition}
 In an XOR cooperative delivery scheme, if a bit $b$ is transmitted after XOR operation with other $k-1$ bits to be delivered to other $k-1$ users respectively, we say that $b$ is transmitted in $k$-multicast, and the (nonnormalized) rate of $b$ is $\frac{1}{k}$.
\end{definition}
Note that since the time slot is spent on transmitting $k$ bits,  rate $\frac{1}{k}$ essentially implies the efficiency of transmitting  $b$. On the other hand, if we aim to count time slots the system consumes, we can simply sum the rate of every bit the server transmits. Obviously, a larger $k$ indicates a lower rate and a higher efficiency.

 The lower bound result is summarized in the following theorem:
 \begin{theorem}\label{thm1}
 Assume that the server adopts a cache allocation placement  and an XOR cooperative delivery procedure described above, then the expected rate that the system can achieve in the delivery phase has the following lower bound:
 \begin{align}
 \mathbb{E}[R]\geq \sum_{i=1}^N\frac{1-q_i}{q_i}\left(1-(1-q_i)^K\right)p_i.\label{ER}
 \end{align}
 \end{theorem}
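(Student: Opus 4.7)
The plan is to lower bound $RF$ by attributing each link use to the (user, bit) pairs it delivers, and then to control the attributed cost using the random cover set of each bit under the cache allocation placement. For any realization, $RF$ equals the number of transmissions, and since each transmission with multicast set $\mathcal{S}_t$ can be decomposed as $1=\sum_{k\in\mathcal{S}_t}1/|\mathcal{S}_t|$, the total cost can be rewritten as a double sum over transmissions and the users they serve. On the other hand, the XOR cooperative protocol reveals to user $k$ only the bit $v_k$ of the transmissions it participates in; hence every bit of $f_{d_k}$ missing from user $k$'s cache must appear as the $v_k$ slot of at least one transmission. Letting $\mathcal{B}_k$ denote user $k$'s missing bits and $\mathcal{S}(k,b)$ the multicast set of one chosen transmission that delivers $b$, this yields, pointwise in the random draw,
\begin{align}
RF\;\ge\;\sum_{k=1}^{K}\sum_{b\in\mathcal{B}_k}\frac{1}{|\mathcal{S}(k,b)|}. \notag
\end{align}

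Second, I would invoke the XOR constraint: every $j\in\mathcal{S}(k,b)\setminus\{k\}$ must have $b$ cached, so $\mathcal{S}(k,b)\setminus\{k\}$ is contained in the cover set of $b$, which forces $|\mathcal{S}(k,b)|\le |\mathrm{cov}(b)|+1$. The bound then becomes
\begin{align}
RF\;\ge\;\sum_{k=1}^{K}\sum_{b\in\mathcal{B}_k}\frac{1}{|\mathrm{cov}(b)|+1}, \notag
\end{align}
whose right-hand side depends only on the random placement and demands, not on the specifics of the delivery algorithm. This decoupling is what makes the lower bound universal over the whole family of XOR cooperative delivery schemes.

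Third, I would take the expectation. Conditioning on $d_k=i$ contributes a factor $p_i$, the event $\{b\notin\mathrm{cache}_k\}$ contributes $1-q_i$, and because caches are independent across users, this event is independent of the number of other users caching $b$, which is distributed as $Y\sim\mathrm{Bin}(K-1,q_i)$. The identity $\mathbb{E}[1/(1+Y)]=(1-(1-q_i)^K)/(Kq_i)$, obtained from $\binom{K-1}{j}/(j+1)=\binom{K}{j+1}/K$ followed by re-indexing to the $\mathrm{Bin}(K,q_i)$ normalization, collapses the per-bit expected contribution to $(1-q_i)(1-(1-q_i)^K)/(Kq_i)$. Summing over the $F$ bits of $f_i$, the $K$ users, and the $N$ possible requests weighted by $p_i$, and dividing by $F$, delivers exactly the form \eqref{ER}.

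The step I expect to be the main obstacle is formalizing the pointwise decomposition in the first paragraph: one must verify carefully that under XOR cooperative delivery the only way for user $k$ to recover a missing bit $b$ is through a transmission whose $v_k$ slot equals $b$, and that this transmission's multicast set lies inside $\{k\}\cup\mathrm{cov}(b)$. Once this structural observation is nailed down, the remainder is a direct application of independence across users and the binomial identity above, so no further technicality is expected.
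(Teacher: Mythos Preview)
Your proposal is correct and follows essentially the same route as the paper's proof. Both arguments attribute to each missing bit a cost of $1/|\mathcal{S}|$ for the multicast $\mathcal{S}$ that carries it, lower-bound that cost by $1/(|\mathrm{cov}(b)|+1)$ via the XOR decodability constraint, and then evaluate the expectation using the binomial distribution of $|\mathrm{cov}(b)|$ together with the identity $\sum_{j=0}^{K-1}\binom{K-1}{j}q^{j}(1-q)^{K-1-j}/(j{+}1)=\bigl(1-(1-q)^K\bigr)/(Kq)$; the only cosmetic difference is that the paper packages the first two steps through its ``rate of a bit'' definition and invokes user/bit symmetry explicitly, whereas you condition directly on $d_k=i$ and on $\{b\notin\mathrm{cache}_k\}$.
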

 \begin{proof}
   Assume that  the user request vector is $\bm{d}=(d_1,d_2,\cdots,d_K)^T$, i.e., user $k$ requests file $f_{d_k}$, where $d_k$ is assumed to be  independently identically distributed according to
 \begin{align}
 \mathbb{P}(d_k=i)=p_i,~ \forall ~1\le i\le N.\notag
 \end{align}

 Under the condition  $\bm{d}$, we know that user $k$ has access to $q_{d_k}F$ bits of its requested file in its cache, then there are $(1-q_{d_k})F$ bits needed to be transmitted for user $k$. Randomly label these bits as $1,2,\cdots,(1-q_{d_k})F$, and denote $r_{k,j}$ the rate of $j$th bit of user $k$, then the whole time slot the system consumes  is
 \begin{align}
 RF=\sum_{k=1}^K\sum_{j=1}^{(1-q_{d_k})F}r_{k,j}.\notag
 \end{align}
 Taking expectation under condition $\bm {d}$, due to the symmetry between different bits in the same file, we have
 \begin{align}
 \mathbb{E}[R|\bm{d}]F&=\sum_{k=1}^K\sum_{j=1}^{(1-q_{d_k})F}\mathbb{E}[r_{k,j}|\bm{d}]\notag\\
 &=\sum_{k=1}^K(1-q_{d_k})F\mathbb{E}[r_{k,1}|\bm {d}].\label{ER1}
 \end{align}
Thus,
 \begin{align}
 \mathbb{E}[R|\bm {d}]&=\sum_{k=1}^K(1-q_{d_k})\mathbb{E}[r_{k,1}|\bm {d}],\notag
 \end{align}
whose expectation with respect to $\bm{d}$ yields
 \begin{align}
 \mathbb{E}[R]&=\mathbb{E}\left[\mathbb{E}[R|\bm{d}]\right]\notag\\
 &=\sum_{k=1}^K\mathbb{E}\left[(1-q_{d_k})\mathbb{E}[r_{k,1}|\bm{d}]\right]\notag\\
 &=K\mathbb{E}\left[(1-q_{d_1})\mathbb{E}[r_{1,1}|\bm{d}]\right]\label{ER2}\\
 &=K\mathbb{E}\left[(1-q_{d_1})\mathbb{E}[r_{1,1}|d_1,d_2,\cdots,d_K]\right]\label{ER4}\\
 &=K\mathbb{E}\left[(1-q_{d_1})\mathbb{E}[r_{1,1}|d_1]\right]\label{ER3}
 \end{align}
 Equation \eqref{ER2} is due to the symmetry between different users and equation \eqref{ER3} is true since we take the expectation with respect $d_2,d_3,\cdots,d_K$ in the outer operation ``$\mathbb{E}$" of \eqref{ER4}, so the  outer expectation in \eqref{ER3} is only taken with respect to $d_1$.

 Assume that the cover set's  cardinality  of the user $1$'s first bit  is $\Lambda$. The value of $\Lambda$ lies in $\{0,1,2,\cdots,K-1\}$, since user 1 must not cache it. Note that, conditioned on $d_1=i$, $\Lambda$ obeys the binomial distribution, i.e,
 \begin{align}
 p(\Lambda=k|d_1=i)={K-1\choose k}q_i^k(1-q_i)^{K-1-k},~\forall ~k\in\{0,1,\cdots,K-1\}.\notag
 \end{align}
 Thus
 \begin{align}
 &\quad\mathbb{E}[r_{1,1}|d_1=i]\notag
 \\&=\sum_{k=0}^{K-1}\mathbb{E}[r_{11}|d_1=i,\Lambda=k]p(\Lambda=k|d_1=i)\label{ER5}\\
 &\geq\sum_{k=0}^{K-1}\frac{1}{k+1}{K-1\choose k}q_i^k(1-q_i)^{K-1-k}\label{ER6}\\
 &=\frac{1}{Kq_i}\sum_{k=1}^{K}{K\choose k}q_i^{k}(1-q_i)^{K-k}\notag\\
 &=\frac{1}{Kq_i}\left(1-(1-q_i)^K\right),\label{ER7}
 \end{align}
 where in \eqref{ER5}, we use the law of total expectation; \eqref{ER6} is due to the fact that under the condition $\Lambda=k$, namely the bit covers exactly $k$ users of $\mathcal{K}\backslash\{1\}$, the rate of the bit can not be lower than $\frac{1}{k+1}$ for  any XOR cooperative delivery algorithm; in \eqref{ER7}, we uses the binomial theorem.

  According to the equation \eqref{ER3} and \eqref{ER7}, we obtain the result:
  \begin{align}
  \mathbb{E}[R]&\geq K\cdot\sum_{i=1}^N\frac{1-q_i}{Kq_i}\left(1-(1-q_i)^K\right)p_i\notag\\
  &=\sum_{i=1}^N\frac{1-q_i}{q_i}\left(1-(1-q_i)^K\right)p_i.\notag
  \end{align}
 \end{proof}


  From  Theorem \ref{thm1} and its proof, two principles should be noted:
  \begin{enumerate}
    \item [P1.] The lower bound \eqref{ER} relies on the cache allocation parameters $q_i$. Thus, in order to let $\mathbb{E}[R]$ be as small as possible, in the placement phase, adjusting $q_i$ is useful. Furthermore, the lower bound can be seen as an approximation for $\mathbb{E}[R]$. In Section \ref{allocation}, we optimize the lower bound, and obtain a potentially suboptimal cache allocation strategy.
    \item [P2.] Note that the only inequality in the above induction is \eqref{ER6}. Then, the equality in \eqref{ER} holds only when in each realization, each bit transmitted by the server achieves the full multicast, i.e., for any bit $b$, it cooperates with every user in its cover set. Therefore, in order to make the lower bound as tight as possible, in the delivery phase, the server should try to make all bits transmitted as full multicast as possible. In Section \ref{delivery}, we introduce delivery algorithms  based on this principle.
  \end{enumerate}

The following corollary indicates that the lower bound is tight in some cases.
\begin{corollary}
When the cache is evenly allocated to each file, i.e., $q_i=\frac{M}{N}$, $i=1,2,\cdots,N$, and $F\rightarrow\infty$,  the lower bound in \eqref{ER} can be arbitrarily  approached, i.e., the optimal average rate is given by
\begin{align}
\mathbb{E}[R^*]=\frac{1-M/N}{M/N}\left(1-\left(1-\frac{M}{N}\right)^K\right).\label{ERa1}
\end{align}
\end{corollary}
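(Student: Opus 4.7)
The plan is a sandwich argument that combines Theorem~\ref{thm1} with the achievability already established in~\cite{maddah2013decentralized}. The $\geq$ direction will come from specializing Theorem~\ref{thm1} to $q_i=M/N$; the $\leq$ direction will follow because the original decentralized scheme of~\cite{maddah2013decentralized} is itself a cache allocation placement with $q_i=M/N$ combined with an XOR cooperative delivery, and it is already known to approach $\tfrac{1-M/N}{M/N}\bigl(1-(1-M/N)^K\bigr)$ in the worst case.

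First I would substitute $q_i=M/N$ for every $i$ into the bound \eqref{ER}. Each summand then shares the common factor $\tfrac{1-M/N}{M/N}\bigl(1-(1-M/N)^K\bigr)$, so pulling it out of the sum and using the fact that $\sum_{i=1}^N p_i=1$ reduces \eqref{ER} in a single line to $\mathbb{E}[R]\ge \tfrac{1-M/N}{M/N}\bigl(1-(1-M/N)^K\bigr)$. This gives the lower bound half of \eqref{ERa1}.

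For the matching upper bound I would invoke the achievability \eqref{Rd} from~\cite{maddah2013decentralized}: under the uniform allocation $q_i=M/N$, its XOR delivery attains the \emph{worst-case} rate $R(M,N,K)=\tfrac{1-M/N}{M/N}\bigl(1-(1-M/N)^K\bigr)$ arbitrarily closely as $F\to\infty$. Because this bound holds for every realization of $\bm d$, i.e.\ $R_{\bm d}\le R(M,N,K)$ deterministically, taking expectation with respect to $\bm d$ preserves it and yields $\mathbb{E}[R^*]\le \tfrac{1-M/N}{M/N}\bigl(1-(1-M/N)^K\bigr)$. Sandwiching with the lower bound of the previous paragraph gives the desired equality \eqref{ERa1}.

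There is essentially no hard step here; the corollary is a direct consequence of Theorem~\ref{thm1} once one recognizes that the scheme of~\cite{maddah2013decentralized} falls inside the placement/delivery framework of Theorem~\ref{thm1}. The only subtlety worth articulating is that a worst-case achievability statement automatically upper-bounds the average rate, so no new delivery algorithm has to be constructed for the achievability side.
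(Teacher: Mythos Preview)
Your proposal is correct and follows essentially the same approach as the paper: specialize Theorem~\ref{thm1} with $q_i=M/N$ for the lower bound, then invoke the worst-case achievability \eqref{Rd} from~\cite{maddah2013decentralized} together with $R_{\mathrm{worst}}\ge \mathbb{E}[R]$ for the matching upper bound. The paper's proof is terser but identical in substance.
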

\begin{proof}
Let $q_i=\frac{M}{N}$, $i=1,2,\cdots,N$, then by \eqref{ER}, we have the righthand side (r.h.s.) of \eqref{ERa1}. On the other hand, as stated in Section II,   the worst rate $R_{\mbox{worst}}$  given by \eqref{Rd} can be arbitrarily approached when $F\rightarrow\infty$. Thus,  we finish the proof by means of the fact $R_{\mbox{worst}}\geq \mathbb{E}[R]$.
\end{proof}

\section{A New Placement Strategy}\label{allocation}
Given the lower bound \eqref{ER}, which is expressed  with parameters $q_i$, we can  obtain the optimal cache allocation $q_i$. The following two lemmas lay the basis for the optimization problem.

\begin{lemma}\label{lem1} The function
\begin{eqnarray}\label{Eqn_h}
h(x)=\frac{x^2}{1-(1-x)^K(1+Kx)}
\end{eqnarray}
is monotonously increasing within the interval $(0,1]$, and $h(x)<1,\forall~x\in(0,1)$ with
\begin{eqnarray*}
\lim_{x\rightarrow 0^+}h(x)=\frac{2}{K(K+1)}.
\end{eqnarray*}
\end{lemma}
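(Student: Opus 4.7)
The plan is to base all three claims on the derivative identity
\begin{equation}
\frac{d}{dx}\bigl[(1-x)^K(1+Kx)\bigr]=-K(K+1)x(1-x)^{K-1},
\end{equation}
which arises from a one-line product-rule cancellation. Writing $D(x):=1-(1-x)^K(1+Kx)$, the identity gives $D(0)=D'(0)=0$, and differentiating once more at $x=0$ gives $D''(0)=K(K+1)$.

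For the limit I would Taylor-expand: $D(x)=\tfrac{K(K+1)}{2}x^2+o(x^2)$ near $x=0$, so $h(x)=x^2/D(x)\to 2/[K(K+1)]$; a double application of L'Hopital's rule gives the same conclusion. For the bound $h(x)<1$ on $(0,1)$ (implicitly $K\ge 2$, as $K=1$ makes $h\equiv 1$ identically), I would set $\phi(x):=1-x^2-(1-x)^K(1+Kx)$, so that $\phi(0)=\phi(1)=0$ and, by the identity above,
\begin{equation}
\phi'(x)=x\bigl[K(K+1)(1-x)^{K-1}-2\bigr].
\end{equation}
The bracket is strictly decreasing in $x$, positive at $0$ and equal to $-2$ at $1$, so it changes sign exactly once on $(0,1)$. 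Hence $\phi$ rises and then falls between equal endpoint values $0$, which forces $\phi>0$ on $(0,1)$ and therefore $h<1$.

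The monotonicity claim is the main step. A short manipulation shows that $h'(x)>0$ is equivalent to
\begin{equation}
\psi(x):=(1-x)^{K-1}\bigl[2+2(K-1)x+K(K-1)x^2\bigr]<2,
\end{equation}
obtained by expanding $2D(x)-K(K+1)x^2(1-x)^{K-1}$ and writing it as $2-\psi(x)$. Since $\psi(0)=2$, it suffices to show that $\psi$ is strictly decreasing on $(0,1]$. Differentiating $\psi$ with the product rule, I expect the cross-terms inside the bracket to collapse cleanly into
\begin{equation}
\psi'(x)=-K(K-1)(K+1)x^2(1-x)^{K-2},
\end{equation}
which is non-positive on $[0,1]$ and strictly negative on $(0,1)$ for $K\ge 2$. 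This gives $\psi<2$ on $(0,1]$, hence $h'>0$ there, proving strict monotonic increase. The only delicate point, and where I would proceed most carefully, is verifying this algebraic telescoping in $\psi'$; the rest is elementary monotonicity bookkeeping that leans on the single identity above.
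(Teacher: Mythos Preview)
Your proposal is correct and essentially follows the paper's own argument. The core monotonicity step is identical: the paper substitutes $y=1-x$ and shows that
\[
l(y):=-K(K-1)y^{K+1}+2(K+1)(K-1)y^K-K(K+1)y^{K-1}+2
\]
satisfies $l'(y)=-(K-1)K(K+1)y^{K-2}(1-y)^2<0$ with $l(1)=0$; your $2-\psi(x)$ is exactly $l(1-x)$, and your $\psi'(x)=-K(K-1)(K+1)x^2(1-x)^{K-2}$ is the same derivative rewritten in $x$. The limit computations also coincide (L'H\^opital in the paper, Taylor in yours---both driven by the identity $D'(x)=K(K+1)x(1-x)^{K-1}$).

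The one small organizational difference: the paper orders the proof as \emph{limit $\to$ monotonicity $\to$ $h<1$}, so that $h(x)<h(1)=1$ drops out immediately from strict increase. Your separate $\phi$-argument for $h<1$ is correct but superfluous once you have monotonicity; you can delete it and conclude $h<1$ in one line as the paper does.
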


\begin{lemma}\label{lem2}
The function $$f(x)=\left\{\begin{array}{ll}
                                                        \frac{1-x}{x}\left(1-(1-x)^K\right),& x\in(0,1]  \\
                                                         K,& x=0
                                                      \end{array}
\right.$$ is a continuous, monotonously, decreasing and convex function within $[0,1]$.
\end{lemma}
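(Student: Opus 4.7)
The plan is to reduce the piecewise definition of $f$ to a single clean closed form, and then read off all three properties almost for free. Specifically, factoring $1-(1-x)^K$ via the geometric series identity gives, for $x\in(0,1]$,
\begin{align*}
f(x)=\frac{1-x}{x}\bigl(1-(1-x)^K\bigr)=(1-x)\cdot\frac{1-(1-x)^K}{1-(1-x)}=\sum_{k=1}^{K}(1-x)^k.
\end{align*}
This identity is the whole engine of the proof; everything below is a corollary of it.

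First I would use this closed form to settle continuity. The right-hand side is a polynomial in $x$, hence continuous on all of $\mathbb{R}$, and its value at $x=0$ is $\sum_{k=1}^{K}1=K$, which matches the declared value $f(0)=K$. So $f$ is continuous on $[0,1]$ and the apparent singularity at $0$ in the original formula is removable.

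Next I would establish monotonicity and convexity term by term. For each $k\in\{1,\dots,K\}$, the function $g_k(x)=(1-x)^k$ on $[0,1]$ satisfies
\begin{align*}
g_k'(x)=-k(1-x)^{k-1}\le 0,\qquad g_k''(x)=k(k-1)(1-x)^{k-2}\ge 0,
\end{align*}
so each $g_k$ is decreasing and convex on $[0,1]$ (and $g_1$ is affine, hence trivially convex). Summing, $f=\sum_{k=1}^{K}g_k$ is decreasing and convex on $[0,1]$. To upgrade ``decreasing'' to ``monotonically decreasing'' in the strict sense the author may intend, I would note that $g_1'(x)=-1<0$ on all of $[0,1]$, so $f'(x)\le -1<0$ throughout, giving strict monotonicity.

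I do not expect a real obstacle: once the telescoping/geometric rewrite is spotted, continuity, monotonicity, and convexity all follow from elementary properties of the polynomials $(1-x)^k$. The only minor point to be careful about is matching the definition at $x=0$, which is handled by evaluating the polynomial form directly; there is no need to argue with limits or L'Hôpital.
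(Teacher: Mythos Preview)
Your proof is correct and takes a genuinely different route from the paper's. The paper first establishes continuity at $0$ via L'H\^opital's rule, then computes
\[
f'(x)=\frac{(1-x)^K(1+Kx)-1}{x^2}=-\frac{1}{h(x)},
\]
where $h$ is the function of Lemma~\ref{lem1}. It then invokes Lemma~\ref{lem1} (that $h$ is positive and increasing on $(0,1)$) to conclude that $f'$ is negative and increasing, giving monotonicity and convexity simultaneously.

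Your geometric-series rewrite $f(x)=\sum_{k=1}^{K}(1-x)^k$ sidesteps all of this: it removes the apparent singularity without L'H\^opital, makes Lemma~\ref{lem1} unnecessary for this argument, and reduces the monotonicity/convexity claims to termwise differentiation of powers. The paper's approach has the advantage of tying $f'$ explicitly to $h$, which is structurally useful elsewhere (the KKT analysis in Section~\ref{allocation} inverts $h$), but as a standalone proof of Lemma~\ref{lem2} your argument is shorter, more elementary, and self-contained.
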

 The proofs of the Lemma \ref{lem1} and \ref{lem2} are given in Appendix \ref{appendix1}. Now we can find the optimal $q_i$ in the sense that it minimizes the lower bound of \eqref{ER}. The optimal $q_i$s are just the solution to the following optimization problem:
 \begin{align}
 \minimize_{q_1,q_2,\cdots,q_N}\quad\sum_{i=1}^N\frac{1-q_i}{q_i}&\left(1-(1-q_i)^K\right)p_i\label{opt}\\
 \mbox{subject to}\qquad q_i-1&\leq 0,\label{ieq1}\\
 -q_i&\leq 0,~i=1,2,\cdots,N,\label{ieq2}\\
 \sum_{i=1}^Nq_i&=M.\label{eqM}
 \end{align}
 This is a standard convex optimization problem, whose associated Lagrange function is given by
 \begin{align}
 &L(q_1,\cdots,q_N, \lambda_1,\cdots,\lambda_N,\delta_1,\cdots,\delta_N,\nu)\notag\\
 =&\sum_{i=1}^N\frac{1-q_i}{q_i}\left(1-(1-q_i)^K\right)p_i+\sum_{i=1}^N\lambda_i(q_i-1)-\sum_{i=1}^N\delta_iq_i+\nu\left(\sum_{i=1}^Nq_i-M\right),\notag
 \end{align}
where $\lambda_i\geq0, \delta_i\geq0$ are the dual variables corresponding to inequality constraints \eqref{ieq1} and \eqref{ieq2} respectively, $\nu$ is the dual variable corresponding to the equality constraint \eqref{eqM}. For convex optimization, a point is optimal if and only if it satisfies the Karush-Kuhn-Tucker (KKT) \cite{boyd2009convex} conditions. By analyzing these conditions, we  have the following observations:
 \begin{enumerate}
   \item If $0<q_i<1$,
   \begin{align}
   p_i=\frac{q_i^2\nu}{1-(1-q_i)^K(Kq_i+1)}=h(q_i)\nu.\notag
   \end{align}
   By Lemma \ref{lem1}, $\frac{2\nu}{K(K+2)}<p_i<\nu$.
   \item If $q_i=1$,
   \begin{align}
   p_i=\nu+\lambda_i\geq\nu.\notag
   \end{align}
   \item If $q_i=0$,
   \begin{align}
   p_i=\frac{2}{K(K+1)}\nu-\frac{2}{K(K+1)}\delta_i\leq\frac{2}{K(K+1)}\nu.\notag
   \end{align}
 \end{enumerate}

 Theorem \ref{thm2} summarizes the results of the above analysis:
 \begin{theorem}\label{thm2} The optimal point of the optimization problem \eqref{opt} is given by
 \begin{align}
 q_i=\left\{\begin{array}{ll}
              1, & p_i\geq\nu  \\
              g\left(\frac{p_i}{\nu}\right), &\frac{2\nu}{K(K+1)}< p_i<\nu \\
              0 ,&p_i\leq \frac{2\nu}{K(K+1)}
            \end{array}
 \right.,\label{solution}
 \end{align}
 where being the inverse function of $h(\cdot)$ given in \eqref{Eqn_h}, $g(x)$ is uniquely well defined when $x\in\left[\frac{2}{K(K+2)},1\right]$ by Lemma \ref{lem1},
 and $\nu$ satisfies equation \eqref{eqM}.
 \end{theorem}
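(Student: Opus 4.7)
The plan is to approach Theorem~\ref{thm2} as a standard exercise in convex programming: verify that KKT conditions are both necessary and sufficient, write them down, and read off the three cases.

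First, I would check the regularity of the problem. Lemma~\ref{lem2} shows that each summand $p_i f(q_i)$ is convex on $[0,1]$, so the objective is convex, and the constraints are affine. Slater's condition holds (take $q_i=M/N\in(0,1)$ for all $i$), hence KKT conditions are necessary and sufficient. I would then compute
\begin{align*}
f'(q)=\frac{d}{dq}\!\left(\frac{(1-q)-(1-q)^{K+1}}{q}\right)=-\frac{1-(1-q)^K(1+Kq)}{q^2}=-\frac{1}{h(q)},
\end{align*}
using the definition of $h$ in \eqref{Eqn_h}. Differentiating the Lagrangian with respect to $q_i$ yields the stationarity condition
\begin{align*}
-\frac{p_i}{h(q_i)}+\lambda_i-\delta_i+\nu=0,\qquad\text{i.e.}\qquad p_i=h(q_i)\,(\nu+\lambda_i-\delta_i),
\end{align*}
together with primal and dual feasibility and the complementary slackness relations $\lambda_i(q_i-1)=0$ and $\delta_i q_i=0$.

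Next, I would split into three cases according to which of the box constraints is active. If $q_i\in(0,1)$, complementary slackness forces $\lambda_i=\delta_i=0$, so $p_i=h(q_i)\nu$; since Lemma~\ref{lem1} gives $h$ strictly increasing on $(0,1]$ with $h(0^+)=\tfrac{2}{K(K+1)}$ and $h(1)=1$, the inverse $g$ is well defined on $[\tfrac{2}{K(K+1)},1]$ and the unique solution is $q_i=g(p_i/\nu)$, with $p_i/\nu\in(\tfrac{2}{K(K+1)},1)$. If $q_i=1$, then $\delta_i=0$ and $\lambda_i\ge 0$, so since $h(1)=1$ one gets $p_i=\nu+\lambda_i\ge\nu$. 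If $q_i=0$, then $\lambda_i=0$ and the stationarity condition must be read as a limit via Lemma~\ref{lem1}, giving $p_i=\tfrac{2}{K(K+1)}(\nu-\delta_i)\le\tfrac{2\nu}{K(K+1)}$. Combining the three cases and using the strict monotonicity of $h$ to guarantee that the thresholds $\nu$ and $\tfrac{2\nu}{K(K+1)}$ correspond to disjoint regions of $p_i$ yields exactly the closed-form expression \eqref{solution}. Finally, the multiplier $\nu$ is the unique scalar that makes the primal feasibility $\sum_i q_i=M$ hold; uniqueness follows because each $q_i$ in \eqref{solution} is monotone nonincreasing in $\nu$.

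The only genuinely delicate point is the boundary case $q_i=0$: the derivative $f'$ blows up like $-1/h(0^+)=-K(K+1)/2$ there, so one must either extend $h$ continuously to $0$ (as Lemma~\ref{lem1} permits) or argue via right-derivatives and subgradients of the convex function $p_i f(q_i)+\nu q_i$ at the corner. Once this is handled cleanly, the rest is routine bookkeeping of KKT cases.
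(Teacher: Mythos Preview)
Your proposal is correct and follows essentially the same KKT-based approach as the paper: form the Lagrangian, analyze the three complementary-slackness cases, and identify the thresholds via the monotonicity of $h$. You are somewhat more explicit than the paper about regularity (Slater's condition) and the limiting argument at $q_i=0$, but the route and the key computations are the same.
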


 If $\nu\leq0$, then $q_i=1$ for all $i$, thus $\sum_{i=1}^Nq_i=N$, which can happen only if $M=N$. Therefore, for $M<N$, we have $\nu>0$.   Since $q_i$ is a non-increasing function of $\nu$, we can first compute $\nu$  through bisection method by \eqref{eqM}, and then determine $q_i$ through \eqref{solution}.
 Moreover, $\nu$ in \eqref{solution} determines two thresholds: $\nu$ itself is a higher threshold, if the popularity of a file is larger than $\nu$, the whole file should be cached in every user; $\frac{2\nu}{K(K+1)}$ is a  lower threshold, if the file popularity is smaller than $\frac{2\nu}{K(K+1)}$, there is no need allocating any cache for that file; if the file's popularity lies between $\left[\frac{2\nu}{K(K+1)},\nu\right]$, the size allocated to the file should be calculated through a nonlinear transformation of the popularity.

 Furthermore, it is worthy noting that the cache allocation strategy in \eqref{solution} relies on the parameter $K$. In the case that the server does not have the knowledge of users' number in the placement phase, the strategy in \eqref{solution} can be approximately achieved for most $K$ as follows: observing that $h(x)\rightarrow x^2$ for any $x\in [0,1]$ when $K\rightarrow\infty$ (where $h(0)$ is defined as the left limit $\frac{2}{K(K+1)}$),  we  replace $h(x)$ with $u(x)=x^2$ and replace $g(x)$ with $v(x)=x^{\frac{1}{2}}$. Thus, we allocate the cache according to
 \begin{align}
 q_i=\left\{\begin{array}{cc}
               1,& p_i\geq \nu  \\
               \sqrt{\frac{p_i}{\nu}},& p_i<\nu
            \end{array}
 \right.,\label{solution2}
 \end{align}
 where $\nu$ is the solution of
 \begin{align}
 \sum_{i=1}^N\min\left\{\sqrt{\frac{p_i}{\nu}},1\right\}=M.\notag
 \end{align}

\section{New Delivery Strategies }\label{delivery}
In this section, we introduce our new delivery strategies.
We begin with a definition and an illustrative example.

\begin{definition}
Given a request $\bm{d}$, a bit $b$ is said to be intended for user $k_b$  if  $b$ is a bit of the file $f_{d_{k_b}}$ but not cached by user $k_b$. Moreover, the union of
the intended user $k_b$ and its cover set $\mathcal{T}_b$  is called the cooperative set $\mathcal{S}_b$ of $b$, i.e. $\mathcal{S}_b=\mathcal{T}_b\cup\{k_b\}$.
\end{definition}

\begin{example}\label{example1}
Let $N=K=5$, $M=\frac{5}{2}$, $F=4$ as Fig. \ref{cache} shows. By convenience, the files are denoted by $A,B,C,D,E$ respectively, i.e, $A=\{a_1,a_2,a_3,a_4\}$ and so on.
Assume that the files have uniform popularity and the original placement randomly selects two bits from each file, as Fig. \ref{fig} illustrates.
Suppose that the users 1,2,3,4,5 request files $A,B,C,D,E$ respectively. Table \ref{table1} gives the intended user, cover set and cooperative set of those bits that are needed to be transmitted.
\end{example}

\begin{figure}[htbp]
\centering\includegraphics[width=0.32\textwidth]{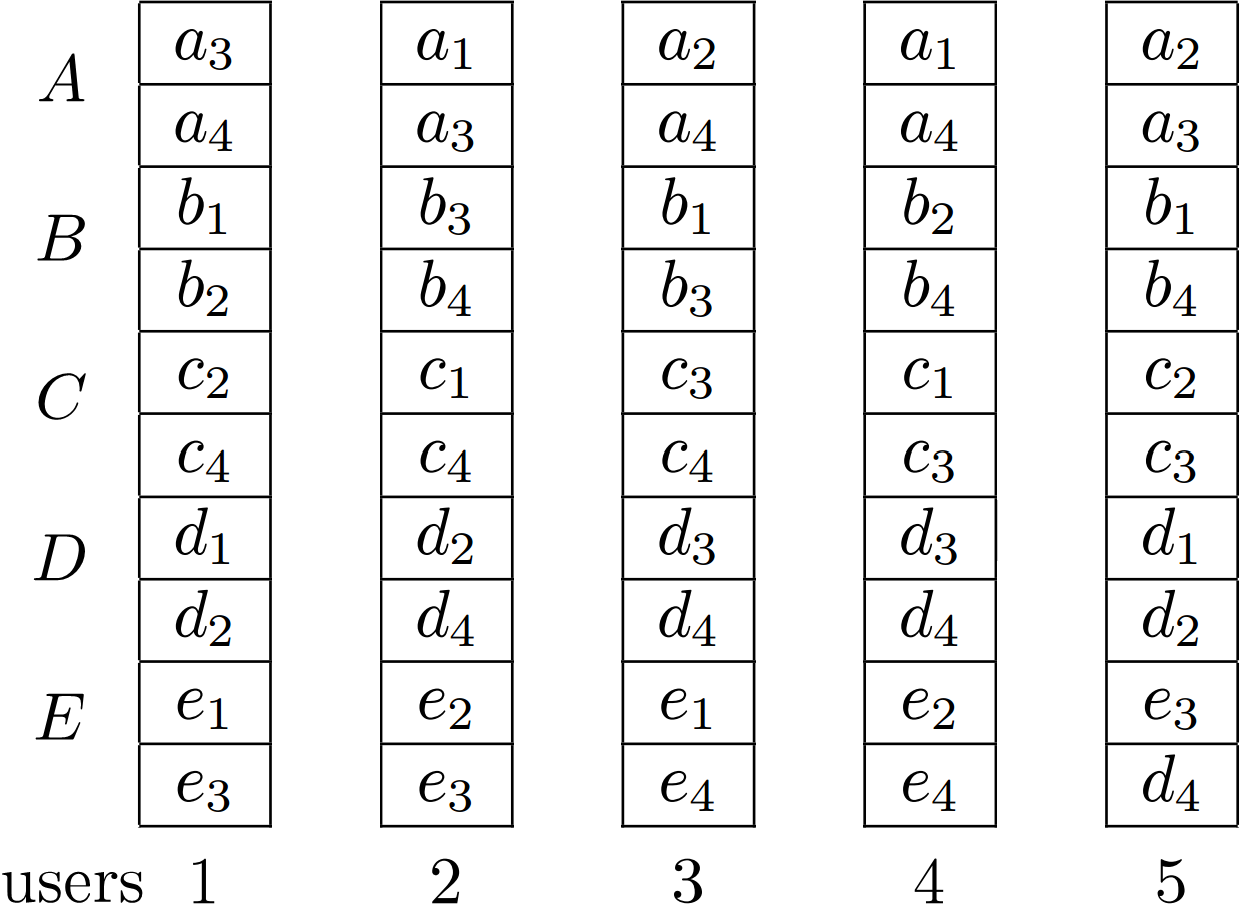}
\caption{ Contents of users' cache in Example \ref{example1}: each user randomly caches two bits of each file.}\label{fig}
\end{figure}

   \begin{table}[!htp]
  \centering
  \caption{Cover property of bits}\label{table1}
  \begin{tabular}{ccccc}
    \toprule
   Bit&Intended user  & Cover set  & Cooperative set  \\
    \midrule
 $a_1$&\multirow{2}{0.03\textwidth}{1}&$\{2,4\}$&$\{1,2,4\}$\\
 $a_2$&&$\{3,5\}$&$\{1,3,5\}$\\
 \midrule
 $b_1$&\multirow{2}{0.03\textwidth}{2}&$\{1,3,5\}$&$\{1,2,3,5\}$\\
 $b_2$&&$\{1,4\}$&$\{1,2,4\}$\\
 \midrule
 $c_1$&\multirow{2}{0.03\textwidth}{3}&$\{2,4\}$&$\{2,3,4\}$\\
 $c_2$&&$\{1,5\}$&$\{1,3,5\}$\\
 \midrule
  $d_1$&\multirow{2}{0.03\textwidth}{4}&$\{1,5\}$&$\{1,4,5\}$\\
 $d_2$&&$\{1,2,5\}$&$\{1,2,4,5\}$\\
 \midrule
  $e_1$&\multirow{2}{0.03\textwidth}{5}&$\{1,3\}$&$\{1,3,5\}$\\
 $e_2$&&$\{2,4\}$&$\{2,4,5\}$\\
 \bottomrule
  \end{tabular}
\end{table}

\subsection{Original Coded Delivery Algorithm}\label{review-sec}

Algorithm \ref{Ori_delivery} depicts the \emph{original coded delivery}  (OD) in \cite{maddah2013decentralized},
which visits all subsets of $\mathcal{K}$ in decreasing order, and transmits the bits only when their  cooperative set is exactly the current visited set.
\begin{algorithm}[!ht]
\caption{Original Coded Delivery}\label{Ori_delivery}
\begin{algorithmic}[1]
\Procedure {Original Delivery}{$d_1,d_2,\cdots,d_K$}
\For{$s=K,K-1,\cdots,1$}
\For{$\mathcal{S}\subset \mathcal{K},|\mathcal{S}|=s$}
\State Server sends $\oplus_{k\in \mathcal{S}}V_{k,\mathcal{S}\backslash\{k\}}$;\label{alg_VkS}
\EndFor
\EndFor
\EndProcedure
\end{algorithmic}
\end{algorithm}

In Algorithm \ref{Ori_delivery}, $V_{k,\mathcal{S}\backslash\{k\}}$  consists of bits intended for user $k$ whose cooperative set is $\mathcal{S}$.
It was claimed in  \cite{maddah2013decentralized} that each $V_{k,\mathcal{S}\backslash\{k\}}$ is assumed to be zero padded to the length of the longest one
for a given cooperative set $\mathcal{S}$.
This is to say, Line 4 in Algorithm \ref{Ori_delivery}
\begin{flushleft}
{\small 4}:~~~Server sends $\oplus_{k\in \mathcal{S}}V_{k,\mathcal{S}\backslash\{k\}}$;
\end{flushleft}
is equivalent to
\begin{flushleft}
{\small 4-1}:~~~$l_{\mathcal{S}}=\max_{k\in \mathcal{S}} |U_{k,\mathcal{S}\backslash\{k\}}|$;\label{min}\label{greedy2}\\
{\small 4-2}:~~~\textbf{if}~~~$l_{S}>0$ ~~\textbf{then}\\
{\small 4-3}:~~~~~~~Server sends $\oplus_{k\in \mathcal{S}}V_{k,\mathcal{S}\backslash\{k\}}$;\\
{\small 4-4}:~~~\textbf{end if}
\end{flushleft}
where $U_{k,\mathcal{S}\backslash\{k\}}$ represents the bits intended for user $k$ whose cooperative set is $\mathcal{S}$ and $V_{k,\mathcal{S}\backslash\{k\}}$ is formed by padding $U_{k,\mathcal{S}\backslash\{k\}}$ to the length $l_{\mathcal{S}}$.

For the request in Example \ref{example1}, Table \ref{table2} describes the transmitting steps of OD, where only the nonempty sets visited are listed for simplicity.
\begin{center}
  \tabcaption{Transmitting steps of OD}\label{table2}
\begin{tabular}{ccc}
    \toprule
   Time slot&Intended user  & Transmitted signal    \\
    \midrule
 1&$\{1,2,3,5\}$& $0\oplus b_1\oplus 0\oplus 0$\\
 2&$\{1,2,4,5\}$&$0\oplus 0\oplus  d_2\oplus 0$\\
 3&$\{1,2,4\}$&$a_1\oplus b_2\oplus 0$\\
 4&$\{1,3,5\}$&$a_2\oplus c_2\oplus e_1$\\
 5&$\{1,4,5\}$&$0\oplus d_1\oplus 0$\\
 6&$\{2,3,4\}$&$0\oplus c_1\oplus 0$\\
 7&$\{2,4,5\}$&$0\oplus0\oplus e_2$\\
 \midrule
 &Total time slots:& 7\\
 \bottomrule
  \end{tabular}
\end{center}

We notice that when  OD visits $\mathcal{S}=\{1,2,4,5\}$, $l_{\mathcal{S}}=1$ since (I) $d_2$ is intended for user $4$ with the cooperative set being
\emph{exactly} $\mathcal{S}$; (II) there are no bits for users $1,2,5$ whose cooperative set are $\mathcal{S}$.
Therefore, the OD   transmits $0\oplus 0\oplus d_2\oplus 0$ where three ``$0$"s are padded for users $1,2,5$ respectively.
 Then, the bit $d_2$ is indeed transmitted in $1$-multicast, though its cooperative set contains $4$ users.
In fact, $d_2$ can be left and sent later together with other two bits
$a_1$ and $b_2$ when the set $\{1,2,4\}$ is visited  since both $a_1$ and $b_2$ have the cooperative set $\{1,2,4\}$ which is contained in the
cooperative set $\{1,2,4,5\}$ of the bit $d_2$. As a result, $a_1,b_2$ and $d_2$ all gain larger multicast.

This illustrative example motivates us to propose a new coded delivery algorithm.

\subsection{Set Centered Greedy Coded Delivery Algorithm }\label{greedy-sec}

Our new coded delivery algorithm, \emph{set centered greedy coded delivery} (SGD), is given in Algorithm \ref{alg_delivery}, where
the set $U_{k,\mathcal{S}\backslash\{k\}}$ is formed by all the bits that (I) are intended for user $k$;
(II) have not been transmitted before; (III) can cover $\mathcal{S}\backslash\{k\}$,
and the set $W_{k,\mathcal{S}\backslash\{k\}}$ consists of the first $l_\mathcal{S}$ bits in  $U_{k,\mathcal{S}\backslash\{k\}}$,
for each user $k$  in the current set $\mathcal{S}$.

\begin{algorithm}[htb]
\caption{Set Centered Greedy Coded Delivery}\label{alg_delivery}
\begin{algorithmic}[1]
\Procedure {Set Greedy Delivery}{$d_1,d_2,\cdots,d_K$}
\For{$s=K,K-1,\cdots,1$}
\For{$\mathcal{S}\subset \mathcal{K},|\mathcal{S}|=s$}
\State $l_{\mathcal{S}}=\min_{k\in\mathcal{S}} |U_{k,\mathcal{S}\backslash\{k\}}|$;\label{min}\label{greedy2}
\If{$l_{S}>0$}
\State  Server sends $\oplus_{k\in \mathcal{S}}W_{k,\mathcal{S}\backslash\{k\}}$\label{UkS};
\EndIf
\EndFor
\EndFor
\EndProcedure
\end{algorithmic}
\end{algorithm}

In contrast to the OD algorithm, the key principles of SGD algorithm are that
\begin{itemize}
\item [P3.] When the algorithm visits some $\mathcal{S}\subset \mathcal{K}$, it requires that all users in $\mathcal{S}$ have data to transmit;
\item [P4.] If any one has no data to cooperate with other users, the algorithm simply visits the next set;
\item [P5.] The remaining bits of other users in $\mathcal{S}$ will be left and visited again when the algorithm visits some subset of $\mathcal{S}$.
\end{itemize}
Based on these principles, $U_{k,\mathcal{S}\backslash\{k\}}$ is changed so that the bits' cover set is a \emph{superset} of  $\mathcal{S}\backslash\{k\}$, instead of
$\mathcal{S}\backslash\{k\}$ \emph{exactly}. Furthermore, the operator ``max''  is replaced by ``min'' so that each user in $\mathcal{S}$ actually has one bit for each transmission.
As a result, SGD exploits full cooperative opportunities for the current visited set $\mathcal{S}$. In this sense,
we say that it is greedy.

Now, a bit has a large cooperative set $\mathcal{S}$ but not fully cooperated in OD can  cooperate with some other bits whose cooperative set is \emph{not exactly} $\mathcal{S}$ but its subset.

\begin{example}\label{example2}
Given the placement in  Example \ref{example1}, Table \ref{table3} demonstrates the transmitting process of SGD,
where $d_2$ and $a_1,b_2$ are transmitted together.

\begin{center}
  \tabcaption{Transmitting steps of SGD}\label{table3}
  \begin{tabular}{ccc}
    \toprule
   Time slot&Intended user  & Transmitted signal    \\
    \midrule
 1&$\{1,2,4\}$& $a_1\oplus b_2\oplus d_2$\\
 2&$\{1,3,5\}$&$a_2\oplus c_2\oplus e_1$\\
 3&$\{2,3\}$&$b_1\oplus c_1$\\
 4&$\{4,5\}$&$d_1\oplus e_2$\\
 \midrule
 &Total time slots: &4\\
 \bottomrule
  \end{tabular}
\end{center}
\end{example}

Due to the fact that, SGD is a greedy and heuristic algorithm, it is not easy to analyze its  performance in general, but the following  theorem shows that the performance of Algorithm \ref{alg_delivery} can approach the lower bound when the cache is evenly allocated to the files. The proof is left in Appendix \ref{appendix3}.

\begin{theorem}\label{thm:achieve} When $q_i=\frac{M}{N}$, $i=1,2,\cdots,N$, and $F\rightarrow\infty$, the rate of Algorithm \ref{alg_delivery} can arbitrarily approach the lower bound in \eqref{ERa1}.
\end{theorem}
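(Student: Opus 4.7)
The plan is to match the lower bound of \eqref{ERa1} by an asymptotic upper bound on the rate of Algorithm \ref{alg_delivery}. Since Theorem \ref{thm1} already gives $\mathbb{E}[R]\geq\frac{1-M/N}{M/N}\bigl(1-(1-M/N)^K\bigr)$ whenever $q_i=M/N$, it suffices to show that for every $\epsilon>0$ and all $F$ sufficiently large, the number of channel uses consumed by SGD is at most $\bigl(\frac{1-M/N}{M/N}\bigl(1-(1-M/N)^K\bigr)+\epsilon\bigr)F$ with probability $1-o_F(1)$ over the random placement, uniformly in the demand $\bm d$. Taking expectations and sending $F\to\infty$ then squeezes $\mathbb{E}[R]$ onto the lower bound.

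The first ingredient is a uniform concentration estimate. Under $q_i=M/N$, each bit of each file enters each user's cache independently with probability $M/N$, so for every file $f_i$, user $k$, and $\mathcal{T}\subseteq\mathcal{K}\setminus\{k\}$, the number of bits of $f_i$ whose cover set within $\mathcal{K}\setminus\{k\}$ equals exactly $\mathcal{T}$ is binomial with mean $\rho_{|\mathcal{T}|}F$, where $\rho_t:=(M/N)^t(1-M/N)^{K-t}$. A Chernoff bound plus a union bound over the $O(NK2^K)$ such triples yields that, outside an event of probability $o_F(1)$, every such count lies within $\delta F$ of its mean, for any prescribed $\delta>0$. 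I work conditional on this good event henceforth.

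The core of the proof is a downward induction on $s=K,K-1,\ldots,1$ describing the state of $U_{k,\mathcal{S}\setminus\{k\}}$ at the moment Algorithm \ref{alg_delivery} visits a subset $\mathcal{S}$ of size $s$: the set contains the $\rho_{s-1}F\pm O(\delta F)$ bits of $f_{d_k}$ whose cover set equals exactly $\mathcal{S}\setminus\{k\}$, together with at most $C_{K,s}\delta F$ residual bits with strictly larger cover set inherited from incomplete transmissions at higher levels. Because the dominant count is independent of $k\in\mathcal{S}$, one has $l_\mathcal{S}=\rho_{s-1}F\pm O(C_{K,s}\delta F)$; after the $l_\mathcal{S}$ XOR transmissions, each user in $\mathcal{S}$ is left with only $O(C_{K,s}\delta F)$ untransmitted bits, which feed into the $U$-sets of proper subsets of $\mathcal{S}$ and maintain the induction hypothesis with an updated constant $C_{K,s-1}$ depending only on $K$.

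To close, I sum channel uses across all visited subsets and apply the binomial identity
\begin{align*}
\sum_{s=1}^{K}\binom{K}{s}\rho_{s-1}=\frac{1-M/N}{M/N}\bigl(1-(1-M/N)^K\bigr),
\end{align*}
which yields $R=\frac{1-M/N}{M/N}\bigl(1-(1-M/N)^K\bigr)\pm O_K(\delta)$. Letting $\delta\to 0$ then closes the gap to the lower bound. The main obstacle I expect is the recursive bookkeeping of residual bits across the $2^K$ induction levels: without care, the constant in front of $\delta$ could blow up as $s$ decreases. The recursion $C_{K,s}\leq 2K\cdot C_{K,s+1}+O(K)$ with base $C_{K,K}=0$ gives a $K$-dependent but finite bound, which suffices for the ``$F\to\infty$ first, $\delta\to 0$ second'' double limit and thereby completes the argument.
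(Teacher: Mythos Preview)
Your proposal is correct and follows essentially the same route as the paper's proof: a downward induction on the subset size $s$ showing $l_{\mathcal{S}}=\rho_{|\mathcal{S}|-1}F+o(F)$ for every $\mathcal{S}$, followed by the binomial summation $\sum_{s=1}^{K}\binom{K}{s}\rho_{s-1}=\frac{1-M/N}{M/N}\bigl(1-(1-M/N)^K\bigr)$. The paper's version is less explicit---it invokes the law of large numbers and absorbs all residual bookkeeping into a single $o(F)$ term without tracking constants---whereas you quantify the concentration via Chernoff plus a union bound and carry the constants $C_{K,s}$ through the recursion; but the skeleton of the argument is identical.
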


Note that both  OD and SGD visit all subsets of $\mathcal{K}$ in a decreasing order. Since
the cardinality of nonempty subset of $\mathcal{K}$ is $2^{K}-1$, the complexities of OD and SGD
are increasing exponentially with $\mathcal{K}$. Typically, when $K$ is large it is hard to implement OD and SGD. This motivates us to present another algorithm with a lower computational complexity especially when $F\ll 2^K$.

\subsection{Bit Centered Greedy Coded Delivery Algorithm}\label{sec_bitdelivery}
In this section, we propose an algorithm termed as \emph{bit centered greedy coded delivery} (BGD). Unlike
OD and SGD, BGD centers on bit rather than set by means of a technique called bit merging.

\begin{definition}
Let us represent a requested bit $b$ by $(\{b\};\{k_b\};\mathcal{T}_b)$, where $k_b$ is its intended user and $\mathcal{T}_b$ is its cover set.
We say that two bits $(b;k_{b};\mathcal{T}_{b})$\footnote{For notational simplicity, for the unmerged bits, we abbreviate $(\{b\};\{k_b\};\mathcal{T}_b)$ as $(b;k_b;\mathcal{T}_b)$.} and $(c;k_{c};\mathcal{T}_{c})$ can be \emph{merged} if $k_{b}\in\mathcal{T}_{c}$ and $k_c\in\mathcal{T}_b$.
The new merged bit is denoted by $(\{b,c\};\{k_b,k_c\};\mathcal{T}_b\cap\mathcal{T}_c)$, where $\mathcal{T}_b\cap\mathcal{T}_c$
is called cover set of the  new merged bit.
Moreover, we say that the merged bits  $(B;\mathcal{K}_B;\mathcal{T}_B)$ and  $(C;\mathcal{K}_C;\mathcal{T}_C)$ can be further merged into
$(B\cup C;\mathcal{K}_B\cup\mathcal{K}_C;\mathcal{T}_B\cap\mathcal{T}_C)$ if $\mathcal{K}_B\subset\mathcal{T}_C$ and $\mathcal{K}_C\subset\mathcal{T}_B$.
\end{definition}

Indeed,  if $(b;k_b;\mathcal{T}_b)$, $(c;k_c;\mathcal{T}_c)$ are ``mergeable", the two bits $b$ and $c$ can be transmitted in a multicast without interfering each other.
For example, in Table \ref{table1}, $b_2$ and $d_2$, represented by $(b_2;2;\{1,4\})$ and $(d_2;4;\{1,2,5\})$ respectively, can be sent in a multicast.  Then,
they are merged into $(\{b_2,d_2\};\{2,4\};\{1\})$, where $\{1\}=\{1,4\}\cap\{1,2,5\}$ is covered by both bits $b_2$ and $d_2$. If a new bit can be transmitted in the same multicast, it has to cover $\{2,4\}$ and its intended user has to be in $\{1\}$. In Table \ref{table1}, $a_1$,  represented by $(a_1;1;\{2,4\})$, is the only bit satisfying this requirement. Therefore, the new merged bit $(\{b_2,d_2,a_1\};\{1,2,4\};\emptyset)$ is formed.

In BGD, the requested bits are sorted according to the sizes of their cooperative sets in decreasing order\footnote{If the size of two subset of $\mathcal{K}$ are equal, they are sorted according to  lexicographic order, see table \ref{table4}.}. Next, the algorithm maintains a list including all the  untransmitted bits and visits the bits in the list
 one by one. When a bit is visited,  it tries to find a bit in the list to be merged  such that the resultant merged cover set maintains the  largest size.  Then, the merged bit   continues to search for a new bit to merge until  no bit can be merged further in the list or the cover set become empty set. Algorithm \ref{bitdelivery} formally states the BGD procedure, which is demonstrated  with Example \ref{example3}.
\begin{algorithm}[htb]
\caption{Bit Centered Greedy Coded Delivery}\label{bitdelivery}
\begin{algorithmic}[1]
\Procedure {Bit Greedy Delivery}{$d_1,d_2,\cdots,d_K$}
\State Establish a bit list, $List=\cup_{k\in\mathcal{K}}f_{d_k}\backslash Z_{k,f_{d_k}}$, where  $Z_{k,f_{d_k}}$ is the partial part of file $f_{d_k}$ cached by user $k$. Then sort the bits in $List$ in decreasing order according to their cooperative set.

\For{$b\in List$}
 \State $B=\{b\},\mathcal{U}=\{k_b\},\mathcal{T}=\mathcal{T}_b$;
\State $Ls=List\cap \{c:k_c\in\mathcal{T}, \mathcal{U}\subset\mathcal{T}_c\}$;
\While {$\mathcal{T}\neq \emptyset~ \mbox{and}~ Ls\neq\emptyset$}
\State $l_{max}=\max_{c\in Ls}\{|\mathcal{T}_c\cap\mathcal{T}|\}$;
\State$\hat{c}=$ the last $c \in Ls$ such that $|\mathcal{T}_c\cap\mathcal{T}|=l_{max}$;
\State $B=B\cup\{\hat c\}$;
\State $\mathcal{U}=\mathcal{U}\cup \{k_{\hat{c}}\}$;
\State $\mathcal{T}=\mathcal{T}\cap\mathcal{T}_{\hat{c}}$;
\State $Ls=Ls\cap\{c:c\in \mathcal{T},\mathcal{U}\subset\mathcal{T}_c\}$;
\EndWhile
\State Transmit $\oplus_{v\in B} v$;
\State Delete the transmitted bits in $List$.
\EndFor
\EndProcedure
\end{algorithmic}
\end{algorithm}

\begin{example}\label{example3}
For a fair comparison,  Table \ref{table4} illustrates the transmitting process of BGD for the placement in  Example \ref{example1}. The initial  $List$ is given by the first column of Table \ref{table4} where the bits are sorted according to their cooperative sets. BGD visits $List$ one by one. It starts by visiting $b_1$. Hence the initial value of $B,\mathcal{U},\mathcal{T}, Ls$ are
\begin{align}
 B=\{b_1\},~\mathcal{U}=\{2\},~\mathcal{T}=\{1,3,5\},~Ls=\{a_1,c_1,e_2\}.\notag
\end{align}
In the first execution of the while loop, since the cover sets of $a_1,c_1,e_2$ are all $\{2,4\}$, and $\mathcal{T}\cap\{2,4\}=\emptyset$, $l_{max}=0$. Thus the last bit $e_2$ in $Ls$ is chosen to form a new merged bit. Thus $B,\mathcal{U},\mathcal{T},Ls$ are renewed as
\begin{align}
 B=\{b_1,e_2\},~\mathcal{U}=\{2,5\},~\mathcal{T}=\emptyset,~Ls=\emptyset,\notag
\end{align}
which finishes the while loop.  Then BGD transmits $b_1\oplus e_2$, and deletes $\{b_1,e_2\}$ from $List$. Subsequently, BGD continues to visit the next bit $d_2\cdots$, until the $List$ becomes empty.
  \end{example}

The reason for BGD choosing the \emph{last} bit that leads to the largest cover set size is that, since  we have sorted the bits according their cooperative set, the bit in the later position of $List$ tends to have a smaller cover set size, i.e., poorer cover property. So, BGD gives priority to sending such bit if it can provide the same size of merged cover set.

  \begin{table*}[!htp] \centering  \caption{Transmitting steps of BGD} \label{table4}
 \begin{tabular}{cccccc} \toprule
 Bit list&Cooperative set&Representation&Merged bit&Transmitted signal&Time slot\\ \midrule
 $b_1$&$\{1,2,3,5\}$&$(b_1;2;\{1,3,5\})$&$(\{b_1,e_2\};\{2,5\};\emptyset)$& $b_1\oplus e_2$&1\\
 $d_2$&$\{1,2,4,5\}$&$(d_2;4;\{1,2,5\})$&$(\{d_2,a_1,b_2\};\{4,1,2\};\emptyset)$&$d_2\oplus a_1\oplus b_2$&2\\
  $a_1$&$\{1,2,4\}$&$(a_1;1;\{2,4\})$&&&\\
   $b_2$&$\{1,2,4\}$&$(b_2;2;\{1,4\})$&&&\\
     $a_2$&$\{1,3,5\}$&$(a_2;1;\{3,5\})$&$(\{a_2,c_2,e_1\};\{1,3,5\};\emptyset)$&$a_2\oplus c_2\oplus e_1$&3\\
      $c_2$&$\{1,3,5\}$&$(c_2;3;\{1,5\})$&&&\\
       $e_1$&$\{1,3,5\}$&$(e_1;5;\{1,3\})$&&&\\
       $d_1$&$\{1,4,5\}$&$(d_1;4;\{1,5\})$&$(d_1;4;\{1,5\})$&$d_1$&4\\
        $c_1$&$\{2,3,4\}$&$(c_1;3;\{2,4\})$&$(c_1;3;\{2,4\})$&$c_1$&5\\
         $e_2$&$\{2,4,5\}$&$(e_2;5;\{2,4\})$&&&\\
\midrule
   &&&&Total time slots:&5\\
      \bottomrule
 \end{tabular} \end{table*}

Note that SGD visits \emph{every} subset of $\mathcal{K}$ in decreasing order to utilize the cooperative opportunities. Unlike SGD, the while loop in BGD searches \emph{part} of subsets of $\mathcal{K}$ according to the bits' cover property. Since it  seeks smaller range compared to SGD, its performance is slightly inferior to SGD in  most cases in general. But this is not always true, Example \ref{example4}  shows a case that SGD consumes more time slots than BGD.

\begin{example}\label{example4}
We revise Example \ref{example1} slightly:  exchange $d_3$ in the cache of user $3$ with $d_1$ in the cache of user $5$. Accordingly, the cover set and cooperative set of $d_1$ (Table \ref{table1}) is revised to  be $\{1,3\}$ and $\{1,3,4\}$ respectively. Through executing SGD and BGD for such a revised version  in Table \ref{table5}
it is verified that SGD consumes 5 time slots while BGD consumes 4 time slots.
\begin{center}
  \tabcaption{Transmitting steps in Example \ref{example4}}\label{table5}
  \begin{tabular}{ccc}
    \toprule
   Time slot&SGD  & BGD    \\
    \midrule
 1&$a_1\oplus b_2\oplus d_2$&$b_1\oplus e_2$\\
 2&$a_2\oplus c_2\oplus e_1$&$d_2\oplus a_1\oplus b_2$\\
 3&$b_1\oplus c_1$&$d_1\oplus c_1$\\
 4&$d_1$&$a_2\oplus c_2\oplus e_1$\\
 5&$e_2$&\\
 \midrule
 Total time slots:&5 &4\\
 \bottomrule
  \end{tabular}
\end{center}
\end{example}

It should be noted that the idea of merging bits has been employed in  a delay-sensitive framework \cite{NiesenDelay}. In BGD, two ingredients are new:
(I) The bits are sorted according to their cooperative sets before transmitting. By doing this, the bits with larger cooperative sets are visited firstly and tends to be sent in a multicast with more users; (II) Instead of using a \emph{misfit} function,
\begin{align}
&\rho\left((B;\mathcal{K}_B;\mathcal{T}_B),(C;\mathcal{K}_C;\mathcal{T}_C)\right)\notag\\
\overset{\triangle}{=}&|\mathcal{T}_B\backslash(\mathcal{T}_C\cup\mathcal{K}_C)|+|\mathcal{T}_C\backslash(\mathcal{T}_B\cup\mathcal{K}_B)|\notag
\end{align}
 BGD uses a \emph{fit} function,
\begin{align}
\eta\left((B;\mathcal{K}_B;\mathcal{T}_B),(C;\mathcal{K}_C;\mathcal{T}_C)\right)\overset{\triangle}{=}|\mathcal{T}_B\cap\mathcal{T}_C|\notag
\end{align}
to measure the fitness of merging two mergeable bits. For two mergeable bits, the misfit function $\rho(\cdot,\cdot)$ counts the number of the users wasted due to merging the two bits, while the fit function $\eta(\cdot,\cdot)$ counts the number of the users we may further accept in the future. Since we are pursuing multicasting,  $\eta(\cdot,\cdot)$ is preferred in our setting.

Note that, SGD consumes $2^K-1$ steps to visit the nonempty subsets of $\mathcal{K}$ sequentially, while  the BGD can be accomplished in $K(1-\frac{M}{N})F$ steps. Thus, it is obvious that whenever $K$ is large such that $F\ll 2^K$, BGD greatly reduces the complexity.

In summary, SGD and BGD exploit more cooperative opportunities between the  bits with different cooperative sets, so attains a better performance. Particularly, BGD is designed to decrease the complexity in circumstances when  $K$ is large and $F\ll 2^K$.

\section{Simulation}\label{sec_simu}

In this section, we conduct  simulations to compare  the \emph{new placement} (NP) proposed in Section \ref{allocation} with OP, GP and SGD, BGD in Section \ref{delivery} with  OD, GD. 

To characterize the popularity of different files,
Zipf distribution \cite{Zipf2007} is employed in our simulation. In Zipf distribution, the popularity of the $i$th most popular file is
\begin{align}
p_i=\frac{\frac{1}{i^\alpha}}{\sum_{j=1}^N\frac{1}{j^\alpha}}\notag
\end{align}
where $N$ is the number of files and $\alpha>0$ is the decay parameter. A larger $\alpha$ indicates that a small number of files accounts for the majority of traffic, while a smaller $\alpha$ indicates a more uniform distribution of popularity. Particularly, $\alpha=0$ indicates uniform popularity distribution.

Recall that, SGD simply changes ``max" in Line 4-1 (Section \ref{review-sec}) in OD to ``min" to avoid padding zeros to bit vectors.  To illustrate the effect of padding zeros to the performance, in the simulation, we add an additional algorithm, i.e., Semi-Set Centered Greedy Coded Delivery (Semi-SGD), which is depicted in Algorithm \ref{alg_semiSGD}.

\begin{algorithm}[htb]
\caption{Semi-Set Centered Greedy Coded Delivery}\label{alg_semiSGD}
\begin{algorithmic}[1]
\Procedure {Semi-Set Greedy Delivery}{$d_1,d_2,\cdots,d_K$}
\For{$s=K,K-1,\cdots,1$}
\For{$\mathcal{S}\subset \mathcal{K},|\mathcal{S}|=s$}
\State $l_{\mathcal{S}}=\left\lfloor\left(\min_{k\in\mathcal{S}} |U_{k,\mathcal{S}\backslash\{k\}}|+\max_{k\in\mathcal{S}\backslash\{k\}}|U_{k,\mathcal{S}\backslash\{k\}}|\right)/2\right\rfloor$;\label{min}\label{greedy2}
\If{$l_{S}>0$}
\State  Server sends $\oplus_{k\in \mathcal{S}}W_{k,\mathcal{S}\backslash\{k\}}$\label{UkS};
\EndIf
\EndFor
\EndFor
\EndProcedure
\end{algorithmic}
\end{algorithm}

In Algorithm  \ref{alg_semiSGD}, $U_{k,\mathcal{S}\backslash\{k\}}$ are formed in the same way as that in SGD. If $|U_{k,\mathcal{S}\backslash\{k\}}|\geq l_{\mathcal{S}}$, the set $W_{k,\mathcal{S}\backslash\{k\}}$ consists of the first $l_\mathcal{S}$ bits in  $U_{k,\mathcal{S}\backslash\{k\}}$; or else, $W_{k,\mathcal{S}\backslash\{k\}}$  is formed by padding zeros on $U_{k,\mathcal{S}\backslash\{k\}}$ to the length $l_\mathcal{S}$. Intuitively, it is a intermediate case between OD and SGD.

For a fair comparison, we mainly focus on two cases: $F\gg2^K$ and $F\ll2^K$ for both uniform and non uniform popularity. $F\gg 2^K$ indicates a large enough $F$, which is consistent with the theoretical assumptions of \cite{maddah2013decentralized} and \cite{niesen2013coded}, while $F\ll2^K$ is
a more practical case when $K$ is  large. All of our results are averaged over 5000 simulations.
\subsection{Uniform Popularity}
In this case, $\alpha=0$. Moreover, GP and GD are the same with OP and OD respectively. NP also leads to the OP as well. Thus, we only need to compare OD, SGD, BGD and Semi-SGD  under  OP. For $F\gg2^K$, we set $K=8,F=10000$ while for case $F\ll 2^K$, we set $K=16, F=1000$. In both cases, we set $N=100$. For comparison, we also plot the  curve of uncoded delivery. In this scheme, the server sends the required bits to each user one by one, which achieves a rate
\begin{eqnarray*}
R_U(M)=K\left(1-\frac{M}{N}\right).
\end{eqnarray*}

Fig. \ref{unif8} shows the case $F\gg 2^K$. As mentioned before, when $F\rightarrow\infty$, the lower bound for OD is tight. We observe that,  even in this case,  SGD, BGD and Semi-SGD can outperform OD slightly. The performance of Semi-SGD is between OD and SGD, since the amount of padded zeros in Semi-SGD is intermediate between those in OD and SGD. 

  Fig. \ref{unif16} shows the case $F\ll 2^K$. In this case, both SGD and BGD can
outperform OD significantly, since the law of large numbers plays out.  As expected, Semi-SGD is still between SGD and OD.  In this case, OD needs large amounts of ``zeros" to be padded to the cooperative bits, which is inefficient. The zeros padded to Semi-SGD are less than OD but more than SGD.  Whereas, both SGD and BGD attain more cooperative opportunities.

\begin{figure*}[!htb]
                \centering
                \subfigure[$F\gg2^K$]{         
                \label{unif8}         
                \includegraphics[width=0.8\textwidth ]{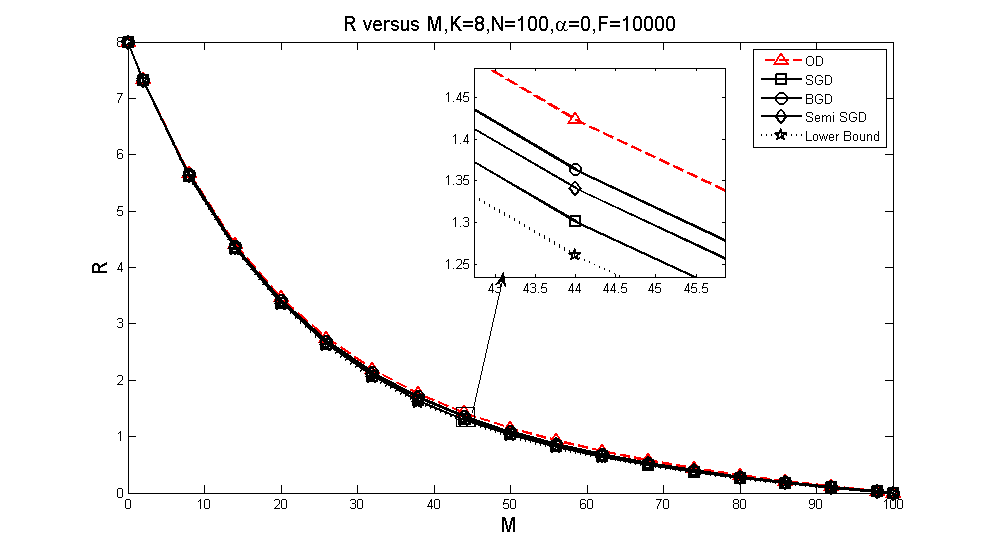}}
                \subfigure[$F\ll2^K$]{        
                \label{unif16}         
                \includegraphics[width=0.8\textwidth ]{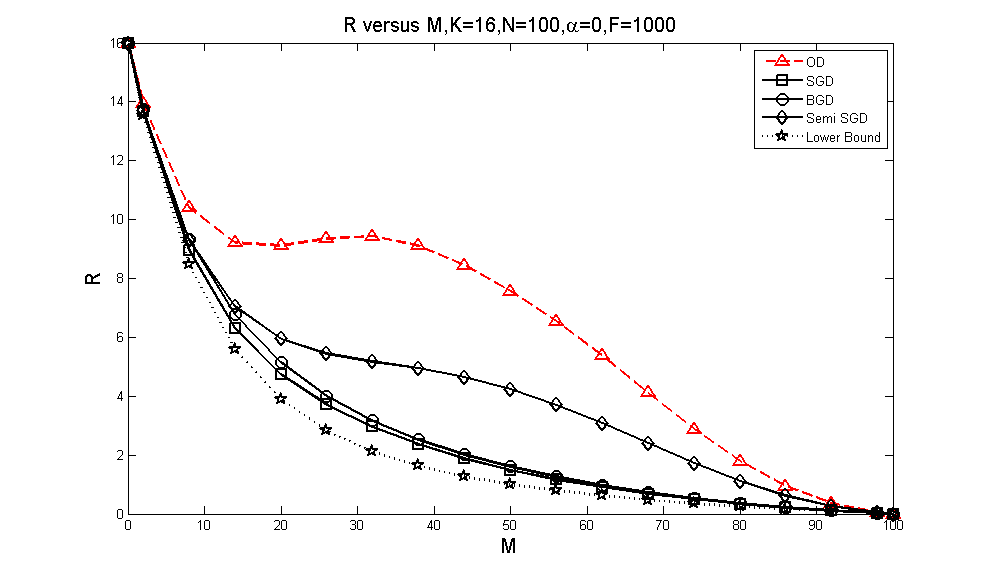}}
                \caption{Rate-memory tradeoff curves, uniform popularity}\label{fig_sim}
\end{figure*}

 The gaps between the uncoded curve and the coded curves indicate the gain from coding. For OD, when $M/N$ is around $1/2$, it is the most difficult to benefits from coding, while for small or large $M/N$, it becomes more easy to obtain coding gains. In fact, there are totally $(1-M/N)F=O(F)$ bits intended for each user. The sizes of the cooperative set of these bits are mainly around its expectation $1+(K-1)\cdot M/N\approx \lceil KM/N\rceil$. There are at least  ${K\choose \lceil KM/N\rceil}$ such sets. In the case of $M/N=1/2$, ${K\choose \lceil K/2\rceil}$ is the  largest  (the same order as $\frac{1}{\sqrt{2\pi K}}2^{K+1}$), thus the cooperative sets of bits intended for different users are very likely to be distinct (recall that $F\ll 2^K$),  which violates the cooperative condition that the cooperative sets of XORed bits should be exactly same in OD. Example 1 (Table \ref{table1}) is of this case. While in case $M/N$ is small or large, ${K\choose \lceil KM/N\rceil}$ is small compared to $O(F)$, the bits' cooperative sets are more likely to be same, which is consistent with the condition of cooperation  in OD. While since SGD and BGD can capture the cooperative opportunities between the bits with different cooperative sets, their performances are consistent with our intuition - the larger memory, the less rate.

In particular, we see that in both Fig. \ref{unif8} and \ref{unif16}, BGD's performance is just slightly inferior to SGD, although it does not search all subsets of users. The experimental results support that, BGD can achieve almost the same performance as SGD. Thus, in the case that $K$ is large, BGD is a good substitute for SGD.

\subsection{Nonuniform Popularity}
In this case, we revise $\alpha$ to be $0.6$ and keep all other parameters unchanged.  For new placement,  two versions are proposed in Section \ref{allocation}, i.e. allocating the cache according to \eqref{solution} and \eqref{solution2} respectively. For simplicity, we abbreviate the new placement according to \eqref{solution} as NP1 and the one according to \eqref{solution2} as NP2.
 Firstly, we compare the performances of OD, SGD, BGD, Semi-SGD as well as the lower bound \eqref{ERa1} under the even cache allocation placement (i.e., OP); Secondly,  we make use of  the optimal value of caching allocation parameter $M_l$ for fair comparison in GP, which is  determined in Appendix \ref{appendix2}; Thirdly, we match each of NP1 and NP2 with each of SGD, BGD and Semi-SGD. Lastly, for comparison, we also plot the lower bound \eqref{ER}, in which the parameters $q_i$ are optimized according to \eqref{solution}.

 \begin{figure*}[!htb]
                \centering
                \subfigure[$F\gg2^K$]{         
                \label{nunif8}         
                \includegraphics[width=0.8\textwidth ]{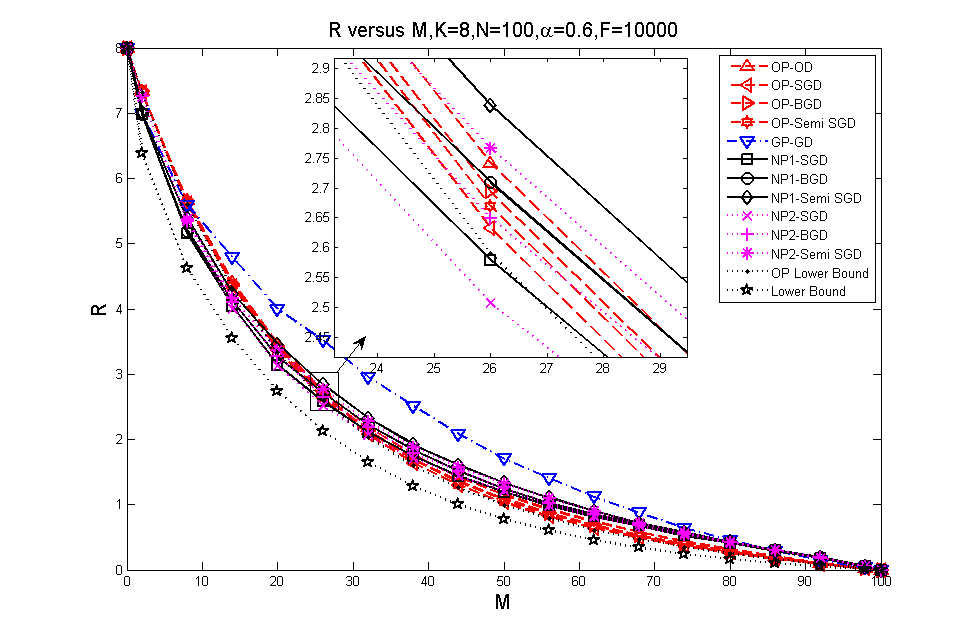}}
                \subfigure[$F\ll2^K$]{        
                \label{nunif16}         
                \includegraphics[width=0.8 \textwidth ]{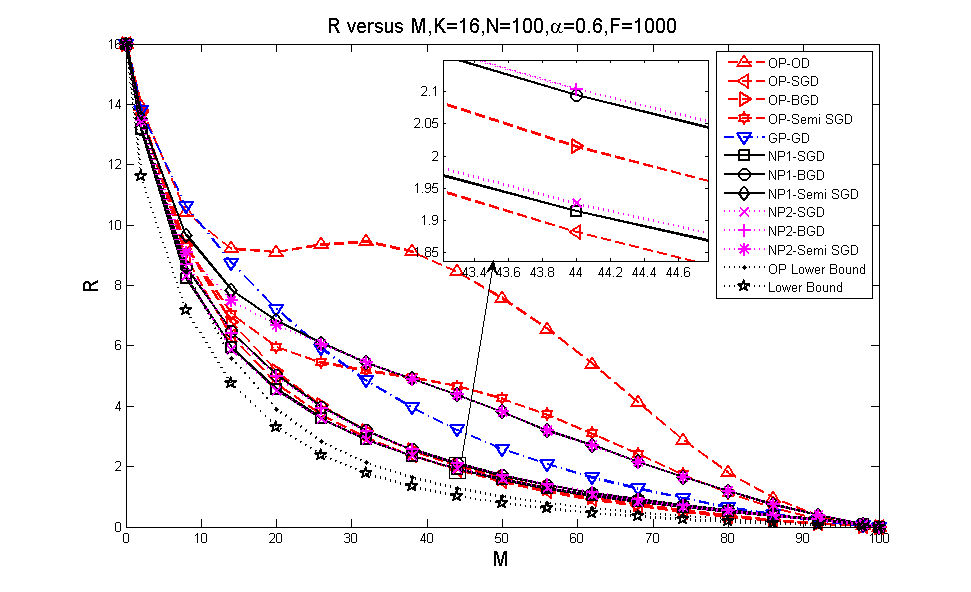}}
                \caption{Rate-memory tradeoff curve, nonuniform popularity}\label{fig_sim2}
\end{figure*}

Fig. \ref{nunif8} and \ref{nunif16} shows the results for $F\gg 2^K$ and $F\ll 2^K$ respectively. Several common observations are: (I) With a fixed delivery (SGD, BGD or Semi-SGD),  NP1 and NP2 are almost the same, this validates that, even in the case that the server has no knowledge of $K$, it can still allocates the cache as in \eqref{solution2}, for most $K$, the effect is almost the same with \eqref{solution}.  (II) With a fixed placement (NP1 or NP2), SGD achieves the best performance, BGD is slightly inferior to SGD, and the performance of Semi-SGD is inferior to SGD, which is consistent with Fig. \ref{fig_sim}.

In case $F\gg2^K$, similar to the uniform case, the performances of OP, SGD, BGD and Semi-SGD under OP are similar, i.e., they all approach the lower bound \eqref{ERa1}, while in case $F\gg 2^K$, the lose of Semi-SGD and OD become significant. Similarly, under NP1 and NP2, the performance distinctions of SGD, BGD and Semi-SGD is less than that in case $F\ll 2^K$. The reason is similar to that in Fig. \ref{fig_sim2}, i.e., the padded zeros become much more due to the lack of bits.
They all outperform GP-GD in case $F\gg 2^K$, which suffers the loss of cooperative opportunities between users in different groups. While in case $F\ll 2^K$, GP-GD is better than OD and Semi-SGD in some interval. This is because although $F$ may be too small to capture the cooperative opportunities in OD and Semi-SGD, it is possible sufficient for grouping scheme, since the number  of users in each group is usually much smaller than $K$.



In summary, SGD and BGD exploit the cooperative opportunities efficiently, especially in case $F\ll 2^K$. Semi-SGD is an intermediate algorithm between SGD and OD in term of the amount of zeros padded to the transmitting  bit vectors, and so is its performance. BGD's performance is close to SGD, and thus can be used to substitute it in some cases.

\section{Conclusions}\label{sec_con}

In this paper, we identified a generic family of caching allocation placement and a generic family of XOR cooperative delivery for decentralized coded caching system. We first derived a lower bound on the rate for the identified family, and then proposed a new placement scheme and two delivery schemes, i.e., SGD and BGD. In the new placement scheme, the server can approximately optimize the cache allocation even when it has no knowledge of users number $K$. In the SGD scheme, we relaxed the cooperative condition in the original delivery algorithm, which enables SGD to capture more cooperative opportunities. Furthermore, from a practical implementation view, SGD can achieve a significant coding gain with a reasonable large $F$.
Particularly,  to realize  coding gains when the computation burden of SGD can not be accepted in reality, BGD is designed such that it  can greatly reduce the complexity of SGD but still maintains the considerable cooperative opportunities typically when $K$ is large and $F\ll 2^K$.

\begin{appendices}
\section{}\label{appendix1}
\begin{proof}[Proof of Lemma \ref{lem1}]
When $x\rightarrow 0+$, $h(x)$ is a limit problem of type $\frac{0}{0}$. Thus we can use L'Hospital's rule to obtain the left limit of $h(x)$ at $0$:
\begin{align}
&\lim_{x\rightarrow 0+}h(x)\notag\\
&=\lim_{x\rightarrow 0+}\frac{2x}{-K(1-x)^{K-1}(-1)(1+Kx)-K(1-x)^K}\notag\\
&=\frac{2}{K}\lim_{x\rightarrow 0+}\frac{1}{(K+1)(1-x)^{K-1}}\notag\\
&=\frac{2}{K(K+1)}.\notag
\end{align}

In order to prove that $h(x)$ is monotonous on $(0,1]$, we only need to check that
\begin{align}
h'(x)>0,~\forall ~x\in(0,1),\label{h1}
\end{align}
where $h'(x)$ is given by

\begin{align}
&h'(x)\notag\\
=&\frac{2x\left(1-(1-x)^K(1+Kx)\right)-K(K+1)x(1-x)^{K-1}x^2}{\left(1-(1-x)^K(1+Kx)^2\right)^2}\notag\\
=&\frac{x}{\left(1-(1-x)^K(1+Kx)^2\right)^2} \cdot\left(2-2(1-x)^K(1+Kx)-K(K+1)x^2(1-x)^{K-1}\right).\label{h}
\end{align}

Therefore \eqref{h1} is equivalent to \eqref{h4}, \eqref{h4}, and \eqref{h3}
\begin{align}
&2-2(1-x)^K(1+Kx)-K(K+1)x^2(1-x)^{K-1}>0,~\forall x\in(0,1)\label{h4}\\
\Leftrightarrow~ &2-2y^K(1+K(1-y))-K(K+1)(1-y)^2y^{K-1}>0,~\forall y\in(0,1)\label{h2}\\
\Leftrightarrow~ &-K(K-1)y^{K+1}+2(K+1)(K-1)y^K-K(K+1)y^{K-1}+2>0,~\forall y\in(0,1),\label{h3}
\end{align}

where in \eqref{h2}, $y=1-x$. Denote $l(y)$ the left side of \eqref{h3}, then we can prove \eqref{h1} via proving
\begin{align}
l(y)>0,~\forall~ y\in(0,1).\notag
\end{align}

Now, $l'(y)$ can be computed:
\begin{align}
l'(y)&=-(K-1)K(K+1)y^K+2(K-1)K(K+1)y^{K-1}-(K-1)K(K+1)y^{K-2}\notag\\
&=(K-1)K(K+1)y^{K-2}(-y^2+2y-1)\notag\\
&=-(K-1)K(K+1)y^{K-2}(1-y)^2<0,~\forall ~y\in (0,1).\notag
\end{align}
Therefore, $l(y)$ is decreasing on $(0,1)$. Since $l(y)$ is continuous at $1$, we have
\begin{align}
l(y)>l(1)=0,~\forall ~y\in(0,1).\notag
\end{align}
That is, \eqref{h1} is proved and then $h(x)$ is increasing on $(0,1]$. Hence,
\begin{align}
h(x)<h(1)=1,~\forall~ x\in(0,1).\notag
\end{align}
\end{proof}

\begin{proof}[Proof of Lemma \ref{lem2}]
Firstly, to prove that $f(x)$ is continuous on $[0,1]$, it is sufficient to show
\begin{align}
\lim_{x\rightarrow 0+}\frac{1-x}{x}\left(1-(1-x)^K\right)=K.\notag
\end{align}
Again as a limit problem of type $\frac{0}{0}$, it can be easy proved with L'Hospital rule.

Next, for $\forall~x\in(0,1)$,
\begin{align}
f'(x)&=\frac{(1-x)^K(1+Kx)-1}{x^2}\notag\\
&=-\frac{1}{h(x)}.\notag
\end{align}
By Lemma \ref{lem1}, $h(x)$ in monotonously increasing on $(0,1)$ and $h(x)>0$ for all $x\in(0,1)$, then
$f'(x)$ is monotonously increasing on $(0,1)$ with
\begin{align}
f'(x)<0,~\forall~x\in(0,1).\notag
\end{align}
Therefore, $f(x)$ is a monotonously decreasing and convex function on $[0,1]$.
\end{proof}

\section{}\label{appendix2}
In this appendix, we solve the problem of finding the optimal cache allocation $M_l$ when the grouping strategy is fixed, i.e., we solve the problem
\begin{align}
\minimize_{M_1,M_2,\cdots,M_L}\quad&\sum_{l=1}^L \mathbb{E} \left[R(M_l,N_l,\mathsf{K}_l)\right]\notag\\
\mbox{subject to}\quad &0\leq M_l\leq N_l,~ \forall~l\in\{1,2,\cdots,L\},\label{opt2}\\
&\sum_{l=1}^LM_l=M,\notag
\end{align}
where $R(M,N,K)$ is given by \eqref{Rd}.

Firstly, let us evaluate  $\mathbb{E}\left[R(M_l,N_l,\mathsf{K}_l)\right]$. Denote $P_l=\sum_{l:f_l\in \mathcal{N}_l}p_l$, then $\mathsf{K}_l$ is the number of users whose requested files lie in $\mathcal{N}_l$. For a single user, the probability of the event that its requested file lies in $\mathcal{N}_l$ is $P_l$. Since different users' requests are assumed to be independent, $\mathsf{K}_l$ is a random variable that obeys the binomial distribution of parameters $K,P_l$, i.e.,
\begin{align}
\mathbb{P}(\mathsf{K}_l=k)={K \choose k}P_l^k(1-P_l)^{K-k},~k\in \{0,1,2,\cdots,K\}.\notag
\end{align}
Thus, $\mathbb{E}\left[R(M_l,N_l,\mathsf{K}_l)\right]$ can be derived:

\begin{align}
\mathbb{E}\left[R(M_l,N_l,\mathsf{K}_l)\right]
&=\sum_{k=0}^K\frac{1-\frac{M_l}{N_l}}{\frac{M_l}{N_l}}\left(1-\left(1-\frac{M_l}{N_l}\right)^k\right){K \choose k}P_l^k(1-P_l)^{K-k}\notag\\
&=\frac{1-\frac{M_l}{N_l}}{\frac{M_l}{N_l}}\left[\sum_{k=0}^K{K \choose k}P_l^k(1-P_l)^{K-k}-\sum_{k=0}^K{K \choose k}\left(\left(1-\frac{M_l}{N_l}\right)P_l\right)^k(1-P_l)^{K-k}\right]\label{bin1}\\
&=\frac{1-\frac{M_l}{N_l}}{\frac{M_l}{N_l}}\left(1-\left(1-\frac{M_l}{N_l}P_l\right)^K\right),\notag
\end{align}
where we use binomial formula for both the first term and the second term in the brackets of \eqref{bin1}.

Let $x_l=\frac{M_l}{N_l}P_l$, then the optimization problem \eqref{opt2} can be transformed into an equivalent form in terms of variable $x_l$:
\begin{align}
\minimize_{x_1,x_2,\cdots,x_L}\quad&\sum_{l=1}^L\frac{P_l-x_l}{x_l}\left(1-\left(1-x_l\right)^K\right)\label{opt3}\\
\mbox{subject to}\quad&0\leq x_l\leq P_l,\notag\\
&\sum_{l=1}^L\frac{N_l}{P_l}x_l=M.\label{eqM2}
\end{align}

The problem can be solved in a similar way as the problem \eqref{opt}. The following lemma plays a basic role:

\begin{lemma}\label{lem3}
For $0<P\leq1$, the function $$f(x)=\left\{\begin{array}{ll}
                                       \frac{P-x}{x}\left(1-(1-x)^K\right), & 0<x\leq P \\
                                       KP ,& x=0
                                     \end{array}\right.
$$ is a continuous, monotonous, decreasing and convex function on $[0,P]$. And
\begin{align}
f'(x)=\frac{\left[(1-x)^{K-1}\left((K-1)x+1\right)-1\right]P}{x^2}&-K(1-x)^{K-1},~\forall x\in (0,P),\notag
\end{align}
furthermore, if $K\geq2$, $f(x)$ is a strict convex function. When $x\rightarrow0+$,
\begin{align}
\lim_{x\rightarrow0+}f'(x)=-\frac{K(K-1)}{2}P-K.\notag
\end{align}
\end{lemma}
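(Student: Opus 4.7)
The plan is to mirror the structure of Lemma 2's proof, treating $P \in (0, 1]$ as a parameter, and to reuse the key inequality from Lemma 1 at the decisive step. I will first verify continuity at $0$, then derive the closed form for $f'(x)$, then establish monotonicity and convexity on $[0, P]$, and finally compute $\lim_{x \to 0^+} f'(x)$.

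For continuity at $0$, one application of L'Hôpital's rule to $(P-x)(1-(1-x)^K)/x$ shows that $\lim_{x\to 0^+} f(x) = P \cdot K = KP$, matching the assigned value. Applying the quotient rule to $f(x) = (P-x)(1-(1-x)^K)/x$ and grouping the $(1-x)^{K-1}$ terms yields the stated expression for $f'(x)$ (the computation parallels the one inside Lemma 2). For monotonicity, set $g(x) = (1-x)^{K-1}((K-1)x+1)$; then $g'(x) = -K(K-1)x(1-x)^{K-2} \leq 0$, so $g(x) \leq g(0) = 1$ with strict inequality for $x \in (0, 1]$. Hence both summands of $f'(x)$ are non-positive and $f'(x) < 0$ on $(0, P)$, giving strict monotone decrease.

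The main obstacle is convexity. Differentiating $f'$ and clearing $x^3$ yields
\begin{align}
x^3 f''(x) = 2P\bigl[1 - (1-x)^{K-1}((K-1)x+1)\bigr] - K(K-1)(1-x)^{K-2} x^2 (P - x), \notag
\end{align}
in which the two summands have opposite sign on $(0, P)$, so no crude bound suffices. Since $0 < P - x < P$ strictly, it is enough to prove the sharper inequality $2[1 - (1-x)^{K-1}((K-1)x+1)] \geq K(K-1)(1-x)^{K-2} x^2$. Under the substitution $K' = K - 1$ this coincides exactly with \eqref{h4} from the proof of Lemma 1, hence holds; combined with the strict gap $P - x < P$, we obtain $f''(x) > 0$ on $(0, P)$ whenever $K \geq 2$. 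For $K = 1$, direct simplification gives $f(x) = P - x$, which is linear and trivially convex.

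For the limit $\lim_{x \to 0^+} f'(x)$, I Taylor-expand $(1-x)^{K-1}$ through order $x^2$ and multiply by $(K-1)x + 1$; the constant and linear terms cancel, leaving $(1-x)^{K-1}((K-1)x+1) - 1 = -\tfrac{K(K-1)}{2}x^2 + O(x^3)$. Dividing by $x^2$, multiplying by $P$, and adding the limit $-K$ coming from $-K(1-x)^{K-1}$ produces the claimed value $-\tfrac{K(K-1)}{2} P - K$.
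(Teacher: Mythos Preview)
Your argument is correct. The continuity, derivative formula, monotonicity, and limit computations are all sound; in particular your reduction of $x^3 f''(x)>0$ to the inequality $2[1-(1-x)^{K-1}((K-1)x+1)] \ge K(K-1)x^2(1-x)^{K-2}$ is exactly \eqref{h4} with parameter $K-1$, and the strict gap $P-x<P$ then yields strict convexity even in the boundary case $K=2$ where that inequality degenerates to equality.

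The paper's route to convexity is different and somewhat slicker: instead of computing $f''$, it observes that the first summand of $f'(x)$ equals $-P/h_{K-1}(x)$, where $h_{K-1}$ is the function $h$ of Lemma~\ref{lem1} with parameter $K-1$. Since Lemma~\ref{lem1} asserts $h_{K-1}$ is positive and increasing on $(0,1]$, this summand is increasing; the second summand $-K(1-x)^{K-1}$ is obviously increasing as well. Hence $f'$ is increasing, giving convexity with no second derivative needed. The same identification gives the limit of $f'$ directly from $\lim_{x\to 0^+}h_{K-1}(x)=\tfrac{2}{(K-1)K}$, bypassing your Taylor expansion. Both proofs ultimately rest on Lemma~\ref{lem1}: yours invokes an internal inequality from its proof, while the paper invokes its conclusion. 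Your approach is more self-contained (it could stand without Lemma~\ref{lem1} being stated separately) at the cost of an extra differentiation; the paper's approach is shorter but leans more heavily on the lemma as a black box.
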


\begin{proof}
To prove that $f(x)$ is continuous, it suffices  to show that $\lim_{x\rightarrow 0+}f(x)=KP$, which is easily verified by L'Hospital rule.

The form of $f'(x)$ can be computed with a little complicated calculation.
If $K=1$, $f(x)=P-x$  is a monotonous decreasing convex function.
If $K\geq2$, then $K-1\geq1$, by Lemma \ref{lem1}, the first term of $f'(x)$ is increasing and less than $0$ on $(0,P)$, and the second term $-K(1-x)^{K-1}$ is also obviously increasing and less than $0$ on $(0,P)$, thus $f'(x)$ is increasing and less than $0$ on $(0,P)$. Therefore, $f(x)$ is a monotonous decreasing convex function on $[0,P]$. Furthermore, if $K\geq2$, $f(x)$ is a strict convex function on $[0,P]$. The last statement can be obtained from Lemma \ref{lem1}, which results in
\begin{align}
&\quad\lim_{x\rightarrow0+}\frac{\left[(1-x)^{K-1}\left((K-1)x+1\right)-1\right]}{x^2}\notag\\
&=\lim_{x\rightarrow0+}-\frac{1}{\frac{x^2}{1-(1-x)^{K-1}\left((K-1)x+1\right)}}\notag\\
&=-\frac{1}{\frac{2}{(K-1)K}}\notag\\
&=-\frac{K(K-1)}{2}.\notag
\end{align}
\end{proof}

Lemma \ref{lem3} implies that the optimization problem \eqref{opt3} is a convex optimization problem, whose Lagrange function is
\begin{align}
&L(x_1,\cdots,x_L,\lambda_1,\cdots\lambda_L,\delta_1,\cdots,\delta_L,\nu)\notag\\
=&\sum_{l=1}^L\frac{P_l-x_l}{x_l}\left(1-\left(1-x_l\right)^K\right)+\sum_{l=1}^L\lambda_l(x_l-P_l)-\sum_{l=1}^L\delta_lx_l+\nu\left(\sum_{l=1}^L\frac{N_l}{P_l}x_l-M\right),
\end{align}
where $\lambda_1,\cdots,\lambda_L,\delta_1,\cdots,\delta_L,\nu$ are dual variables. By the KKT conditions \cite{boyd2009convex}, we have

\begin{enumerate}
  \item If $0<x_l<P_l$, $P_l$ is given by
  \begin{align}
  P_l=&\frac{x_l^2}{1-(1-x_l)^{K-1}[(K-1)x_l+1]}
  \cdot\left[\nu\frac{N_l}{P_l}-K(1-x_l)^{K-1}\right].\label{P1}
  \end{align}

  \item If $x_l=P_l$, $P_l$  satisfies
  \begin{align}
  P_l=&\frac{P_l^2}{1-(1-P_l)^{K-1}\left[(K-1)P_l+1\right]}\cdot \left[\nu\frac{N_l}{P_l}+\lambda_l-K(1-P_l)^{K-1}\right]\notag\\
  \geq& \frac{P_l^2}{1-(1-P_l)^{K-1}\left[(K-1)P_l+1\right]}\cdot\left[\nu\frac{N_l}{P_l}-K(1-P_l)^{K-1}\right].\label{P2}
  \end{align}
  \item If $x_l=0$, $P_l$ satisfies
  \begin{align}
  P_l&=\frac{2}{K(K-1)}\left(\nu\frac{N_l}{P_l}-\delta_l-K\right)\notag\\
  &\leq \frac{2}{K(K-1)}\left(\nu\frac{N_l}{P_l}-K\right).\label{P3}
\end{align}
\end{enumerate}

Similarly to Section \ref{allocation}, for $M<N$, $\nu\geq0$, thus if we define
\begin{align}
g_l(x)=\frac{x^2}{1-(1-x)^{K-1}\left[(K-1)x+1\right]}\left[\nu\frac{N_l}{P_l}-K(1-x)^{K-1}\right],\label{gl}
\end{align}
and let $a_l$ be $\frac{2}{K(K-1)}\left(\nu\frac{N_l}{P_l}-K\right)$, $b_l$ be
$$ \frac{P_l^2}{1-(1-P_l)^{K-1}\left[(K-1)P_l+1\right]}\left[\nu\frac{N_l}{P_l}-K(1-P_l)^{K-1}\right],$$
we can express $x_l$ in the following way:
\begin{enumerate}
  \item If $\nu\geq\frac{KP_l}{N_l}$, then $a_l>0$ and $\nu\frac{N_l}{P_l}-K(1-x_l)^{K-1}>0,~\forall~ x\in [0,P_l]$, thus $g_l(x)$ is increasing on $(0,P_l)$ and then $g_l^{-1}(x)$ is uniquely defined on the interval $(a_l,b_l)$
      thus
      \begin{align}
      x_l=\left\{\begin{array}{ll}
                   P_l ,& \mbox{if } P_l\geq b_l   \\
                   g_l^{-1}(P_l) ,& \mbox{if } P_l\in  \left(a_l,b_l\right)\\
                  0 ,& \mbox{if } P_l\leq   a_l
                 \end{array}
      \right..
      \end{align}
  \item If $\nu<\frac{KP_l}{N_l}$ but $\nu\geq \frac{KP_l}{N_l}(1-P_l)^{K-1}$, then \eqref{P3} is impossible since $P_l>0$ and $\nu\frac{N_l}{P_l}-K<0$. Let $x^*_l$ be the point such that $\nu\frac{N_l}{P_l}=K(1-x^*_l)^{K-1}$, then $g_l(x)$ is increasing on $(x^*_l,P_l)$, and $g^{-1}_l(x)$ is uniquely defined on the interval $(0,b_l)$. Thus
      \begin{align}
      x_l=\left\{\begin{array}{ll}
                  P_l, & \mbox{if } P_l\geq b_l  \\
                   g_l^{-1}(P_l) ,& \mbox{if } P_l\in(0,b_l)
                 \end{array}
      \right..
      \end{align}
  \item If $\nu<\frac{KP_l}{N_l}(1-P_l)^{K-1}$, \eqref{P2} always holds, and thus $x_l=P_l$.
\end{enumerate}

Summarizing these results, we obtain the following expression of $x_l$:
\begin{align}
x_l=\left\{\begin{array}{ll}
             P_l, & \mbox{if } P_l\geq b_l \\
             g_l^{-1}(P_l), & \mbox{if } P_l\in(\max\{a_l,0\},b_l) \\
             0, & \mbox{if } P_l\leq\max{ \{a_l,0\}}
           \end{array}
\right.,
\end{align}
where $g_l(x)$ is define as \eqref{gl} on the interval $(0,P_l)$ $\left(\mbox{if } \nu>\frac{KP_l}{N_l}\right)$ or $(x^*_l,P_l)$ $\left(\mbox{if } \nu\leq \frac{KP_l}{N_l}\right)$. It is easy to see that $x_l$ are non increasing functions of $\nu$. On the other hand, $\nu$ can be determined by equation \eqref{eqM2}, thus, we can find $\nu$ through a bisection method easily, and then determine the optimal $x_l$.

Once we find the optimal $x_l$, we can determine the optimal $M_l$ by the relationship
\begin{align}
M_l=\frac{N_l}{P_l}x_l.\notag
\end{align}

\section{}\label{appendix3}

\begin{proof}[Proof of of Theorem \ref{thm:achieve}]
When $q_i=\frac{M}{N}\overset{\triangle}{=}q$,  $i=1,2,\cdots,N$, consider a  particular request $\bm d=(d_1,d_2,\cdots,d_K)$. Let $V_{k,\mathcal{S}\backslash\{k\}}$ denote the set of bits in file $W_{d_k}$ such that their cover set is exactly $\mathcal{S}\backslash\{k\}$. Then when $F\rightarrow\infty$, by the law of large numbers,  the size of $V_{k,\mathcal{S}\backslash\{k\}}$ is
\begin{align}
|V_{k,\mathcal{S}\backslash\{k\}}|=Fq^{|\mathcal{S}|-1}(1-q)^{K-|\mathcal{S}|+1}+o(F), \forall~k\in\mathcal{S}.\label{eqn:Vk}
\end{align}

In Algorithm \ref{alg_delivery}, initially, $U_{k,\mathcal{K}\backslash\{k\}}=V_{k,\mathcal{K}\backslash\{k\}}$, thus
\begin{align}
l_{\mathcal{K}}=\min_{k\in\mathcal{K}}\left\{U_{k,\mathcal{K}\backslash\{k\}}\right\}=Fq^{K-1}(1-q)+o(F).\label{eqn:lK}
\end{align}
Next, consider a set $\mathcal{S}\subset\mathcal{K}$ such that $|\mathcal{S}|=K-1$, $U_{k,\mathcal{S}\backslash\{k\}}$ is composed of two parts, i.e.,  $V_{k,\mathcal{S}\backslash\{k\}}$, and some remaining bits in $U_{k,\mathcal{K}\backslash\{k\}}$. Note from \eqref{eqn:lK} that the size of the remaining bits in $U_{k,\mathcal{S}\backslash\{k\}}$ is of size $o(F)$. Thus, by \eqref{eqn:Vk}, we have
\begin{align}
|U_{k,\mathcal{S}\backslash\{k\}}|=Fq^{K-2}(1-q)^2+o(F), \forall~k\in\mathcal{S},\notag
\end{align}
which gives
\begin{align}
l_{\mathcal{S}}=\min_{k\in\mathcal{S}}\left\{|U_{k,\mathcal{S}\backslash\{k\}}|\right\}=Fq^{K-2}(1-q)^2+o(F).\notag
\end{align}
Generally, we can recursively show that, the size of $U_{k,\mathcal{S}\backslash\{k\}}$ for any $k\in\mathcal{S}\subset\mathcal{K}$ is
\begin{align}
|U_{k,\mathcal{S}\backslash\{k\}}|=Fq^{|\mathcal{S}|-1}(1-q)^{K-|\mathcal{S}|+1}+o(F)\notag
\end{align}
and hence $l_{\mathcal{S}}$ is given by
\begin{align}
l_{\mathcal{S}}=Fq^{|\mathcal{S}|-1}(1-q)^{K-|\mathcal{S}|+1}+o(F), ~\forall ~\mathcal{S}\subset\mathcal{K}.\notag
\end{align}

Thus, the delivered bits in total is
\begin{align}
R_{\bm{d}}F&=\sum_{\mathcal{S}\subset\mathcal{K}, \mathcal{S}\neq\emptyset }l_{S}\notag\\
&=\sum_{s=1}^K \sum_{\mathcal{S}\subset\mathcal{K},|\mathcal{S}|=s}l_{\mathcal{S}}\notag\\
&=\sum_{s=1}^K{K\choose s}q^{s-1}(1-q)^{K-s+1}+o(F)\notag\\
&=F\cdot\frac{1-q}{q}\left(1-(1-q)^K\right)+o(F).\notag
\end{align}
 Therefore, the average rate is given by
 \begin{align}
 \mathbb{E}[R_{\bm d}]=\frac{1-q}{q}\left(1-(1-q)^K\right)+\frac{o(F)}{F}.\notag
 \end{align}
Then, when $F\rightarrow\infty$, we have $$\mathbb{E}[R_{\bm d}]\rightarrow \frac{1-q}{q}\left(1-(1-q)^K\right),$$
which is consistent with \eqref{ERa1}, when $q_i=q=\frac{M}{N}$ for  $i=1,2,\cdots,K$.
\end{proof}

\end{appendices}

\bibliographystyle{IEEEtran}

\end{document}